\documentclass[12pt,english,reqno]{smfart}
\usepackage[utf8]{inputenc}
\usepackage{amsmath, amssymb, amsfonts}
\usepackage{mathrsfs}
\usepackage{geometry}
\usepackage{enumitem}
\usepackage{soul}
\usepackage{booktabs}
\usepackage{array}
\usepackage{longtable}
\usepackage{xcolor}
\usepackage{colortbl}
\usepackage{pdflscape}
\usepackage{cite}
\usepackage{caption}

\usepackage[compat=1.1.0]{tikz-feynman}
\usepackage{tikz,contour}
\usetikzlibrary{calc,decorations.markings,positioning}
\usetikzlibrary{arrows.meta, decorations.markings}

\definecolor{headerblue}{RGB}{220, 235, 250}
\definecolor{rowgray}{RGB}{245, 245, 245}

\newcolumntype{C}[1]{>{\centering\arraybackslash}p{#1}}
\newcolumntype{R}[1]{>{\centering\arraybackslash}p{#1}}

\newcommand{\F}[2]{\dfrac{#1}{#2}}
\newcommand{\Fs}[2]{{\small\dfrac{#1}{#2}}}
\newcommand{\Ff}[2]{{\footnotesize\dfrac{#1}{#2}}}
\newcommand{\Fsc}[2]{{\scriptsize\dfrac{#1}{#2}}}

\newcommand{\Real}{\Re{e}}

\setlist[itemize]{topsep=3mm, itemsep=1.8mm, parsep=1.2mm}
\newlist{conditions}{itemize}{1}
\setlist[conditions]{leftmargin=26mm, rightmargin=12mm, topsep=3mm, itemsep=1.8mm, parsep=1.2mm}
\makeatletter
\newcommand{\cond}[1]{\item[(\textsc{#1})]\protected@edef\@currentlabel{\textsc{#1}}}
\newcommand{\condconst}[2]{\item[($\text{\textsc{#1}} \mid #2$)]\protected@edef\@currentlabel{$\text{\textsc{#1}} \mid #2$}}
\makeatother

\newcommand{\undernotation}[1]{\boldsymbol{#1}}
\newcommand{\intsunset}{I_\circleddash}
\newcommand{\jintsunset}{\mathcal{I}_\circleddash}
\newcommand{\nintsunset}{\mathfrak{I}_\circleddash}
\newcommand{\persunset}{\pi_\circleddash}
\newcommand{\Msunset}{\mathcal{M}[I_\circleddash^L]}

\newcommand{\Res}[2]{\mathop{\mathrm{Res}}_{#2}\left(#1\right)}
\newcommand{\Frob}{\operatorname{Frob}}

\usepackage{xcolor}
\definecolor{bubblegum}{rgb}{0.99, 0.76, 0.8}
\definecolor{ietocean}{rgb}{0, 30, 140}
\definecolor{ietcoast}{RGB}{0, 150, 173}
\definecolor{ietlagoon}{RGB}{0, 216, 180}
\definecolor{veriforest}{RGB}{0, 80, 40}
\definecolor{revred}{RGB}{153, 0, 0}
\usepackage[]{hyperref}
\hypersetup{
	colorlinks,%
	citecolor={ietlagoon},%
	filecolor={ietocean},%
	linkcolor={veriforest},%
	urlcolor={revred},
	bookmarksnumbered=true,     
	bookmarksopen=true,         
	bookmarksopenlevel=1,       
	pdfstartview=Fit,           
	pdfpagemode=UseOutlines,
	pdfpagelayout=TwoPageRight
}

\usepackage{cleveref}
\crefname{equation}{eq.}{eqs.}
\crefname{section}{sec.}{secs.}
\Crefname{equation}{Eq.}{Eqs.}
\Crefname{section}{Sec.}{Secs.}

\newtheorem{theorem}{Theorem}[section]

\theoremstyle{definition}
\newtheorem{definition}{Definition}[section]

\newtheorem{proposition}{Proposition}[section]
\theoremstyle{remark}
\newtheorem{remark}{Remark}[section]

\begin{document}
\title{The multiloop sunset to all orders}
\author{Pierre Vanhove}
\date{\today}
\email{pierre.vanhove@ipht.fr}
 \address{
Institut de Physique Th{\'e}orique, Universit{\'e} Paris Saclay,
CNRS, CEA, F-91191 Gif-sur-Yvette, France}
\begin{abstract}
We derive exact, convergent representations of multiloop sunset Feynman 
integrals in two dimensions for arbitrary mass configurations and all loop 
orders valid  for large Euclidean momentum. The integrals are expressed as sums of symmetric polynomials in 
logarithmic mass ratios, normalized by the external momentum squared, with 
coefficients determined by analytic series expansions. For the equal-mass 
case, we establish a dimension-lowering relation expressing the $L$
loop sunset integrals in $D+2$ as the one in $D$ dimensions acted on
by a
differential operator of order $L-1$.

These representations are free of complicated transcendental functions, 
making them well-suited to both formal analysis and high-precision numerical 
evaluation. The two-dimensional results serve as boundary conditions for 
dimension-shifting relations, enabling systematic reconstruction of 
four-dimensional sunset integrals via analytic continuation to 
$D = 4 - 2\epsilon$.
\end{abstract}

\maketitle
\tableofcontents

\section{Introduction}

 The multiloop sunset graph in fig.~\ref{fig:sunset} plays a 
fundamental role in quantum field theory~\cite{Kallen:1955fb,
Broadhurst:1993mw,Caffo:1998du,Groote:2005ay} 
and it appears in a 
broad range of precision calculations. These include two-loop mixed 
QCD--electroweak master integrals for Higgs boson production via gluon fusion 
involving virtual top quarks~\cite{Marzucca:2025dnh}, analytic two-loop QCD 
amplitudes for diphoton production mediated by heavy-quark 
loops~\cite{Becchetti:2025rrz}, and the three-loop QED photon 
self-energy~\cite{Forner:2024ojj}, or vacuum polarisation in chiral
perturbation theory~\cite{Lellouch:2025rnz}.
Their evaluation at multiple loops is notoriously challenging due to 
the presence of elliptic integrals and transcendental structures beyond 
multiple polylogarithms~\cite{Laporta:2004rb,Bloch:2014qca,
Adams:2013nia,Remiddi:2013joa,Remiddi:2016gno,Bloch:2016izu,
Broedel:2017siw,Broedel:2018iwv,Primo:2017ipr,
Muller-Stach:2011qkg,Kniehl:2005bc}.

The expression for the multiloop sunset  integral in $D=2$ dimensions is
\begin{equation}\label{e:sunsetdef}
  \intsunset^{(L)}(p^2,\undernotation m^2):={1\over \pi^L} \int
  { d^2\ell_1\cdots d^2\ell_L\over
    (\ell_1^2-m_1^2+i\varepsilon) \cdots (\ell_L^2-m_L^2+i\varepsilon) ((\ell_1+\cdots+\ell_L-p)^2-m_{L+1}^2+i\varepsilon)}  
\end{equation}
where $\undernotation m^2 := (m_1^2, \ldots, m_{L+1}^2)$ denotes the vector of squared masses.
This integral is both ultraviolet and infrared finite. We work in the large
Euclidean region $t:=-p^2\gg1$ .

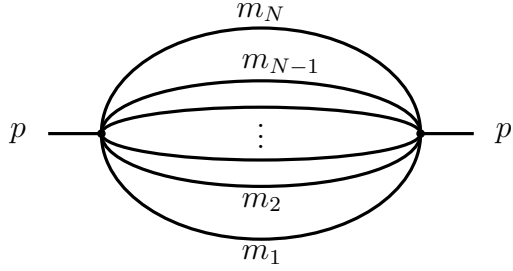
\begin{figure}[t]
                        \centering
                        \begin{tikzpicture}[scale=0.7]
                                \draw[very thick] (0,0) ellipse (3cm and 2cm);
                                \draw[very thick] (0,0) ellipse (3cm and 1.cm);
                                \draw [very thick](0,0) ellipse (3cm and 0.5cm);
                                \draw [very thick] (-3,0)--(-4,0);
                                \draw [very thick] (3,0)--(4,0);
                \filldraw [black] (3,0) circle (2pt);
                \filldraw [black] (-3,0) circle (2pt);
                                \node[text width=0.5cm, text centered ] at (0,0.1) {$\vdots$};
                                \node [text width=0.5cm, text centered ] at (0,-2.3){$m_1$};
                                \node [text width=0.5cm, text centered ] at (0,-1.3){$m_2$};
                                \node [text width=0.5cm, text centered ] at (0,1.3){$m_{N-1}$};
                                \node [text width=1.5cm, text centered
                                ] at (0,2.3){$m_{N}$};
                                 \node [text width=0.5cm, text
                                 centered,  right] at (4,0){$p$};
                                  \node [text width=0.5cm, text centered,left ] at (-4,0){$p$};
                        \end{tikzpicture}
                        \caption{Multiloop sunset graph.}
                        \label{fig:sunset}
                \end{figure}

\medskip
                
\noindent{\bf Our main results are:}

\begin{itemize}
  \item[$\bullet$] An exact, convergent expansion valid for generic
mass configurations for the multiloop sunset integral in two
dimensions, stated in theorem~\ref{thm:generic}, 
  \begin{multline}\label{e:IsunsetResultinto}
\intsunset^{(L)}(p^2,\undernotation m^2)=-{1\over p^2  } \sum_{(r_1,\dots,r_{L+1})\in\mathbb
  N^{L+1}}\, {(r_1+\cdots+r_{L+1})!^2\over r_1!^2\cdots r_{L+1}!^2}
\prod_{i=1}^{L+1}\left(-{m_i^2\over p^2}\right)^{r_i}\cr
\times
\sum_{k=1}^{L+1} c_k(r_1+\cdots +r_{L+1}) P_{L+1}^{L+1-k}\left(\ell_1(r_1),\cdots,\ell_{L+1}(r_{L+1})\right)
\end{multline}
where the coefficients $c_r(k)$ are given in equation~\eqref{e:Cdef}, and $P_{L+1}^{r}(\ell_1,\dots,\ell_{L+1})$ are the symmetric
polynomials of order $r$, defined in equation~\eqref{eq:polydef}, in the variables
\begin{equation}
    \ell_{i}(r):=\log\left(-\frac{m_{i}^{2}}{p^2}\right)-2\sum_{n=1}^r\frac{1}{n}.
  \end{equation}
From now we will drop the $+i\epsilon$ prescription on the masses and
choose the principal value determination of the logarithm so that
$\log((m^2-i\varepsilon)/(-p^2))= \log|m^2/p^2|+ i \arg((m^2-i\varepsilon)/(-p^2))$.
This series converges absolutely for $|p^2| > \left(m_1+\cdots+m_{L+1}\right)^2$ and
represents the exact value of the integral, not merely an asymptotic
expansion above the normal threshold. 

\item[$\bullet$] In the all-equal-mass case we derive in theorem~\ref{thm:equalmass},
the alternative expression 
\begin{multline}\label{e:equalmassintro}
 \intsunset^{(L)}(p^2,\undernotation 1) =
 {d  R^{(L)}(p^2)\over dp^2}\,\left(- (L+1) \left(-
    R^{(L)}(p^2)\right)^{L} + \sum_{n=1}^{\infty} e^{n\,R^{(L)}(p^2)}
  \sum_{s=0}^{L-1} d_{s}^{L}(n) \left(R^{(L)}(p^2)\right)^{s}
\right)\cr
+ \sum_{r=0}^{L-1}
    \alpha_r^{(L)}\,\Frob_r^{(L)}(p^2),
\end{multline}
with rational coefficients $d_s^L(n)$, $\alpha_r^{(L)}$ are
polynomials in the odd zeta values of weight $L-r$,
$\Frob_r^{(L)}(p^2)$ are the elements of the Frobenius basis defined
in equation~\eqref{eq:Frob-def},    and 
\begin{equation}
  R^{(L)}(p^2)
  \;=\;i \pi
  -\log(p^2)
  \;+\;
  \sum_{\substack{(r_1,\ldots,r_{L+1})\in\mathbb{N}^{L+1}\\
                  r_1+\cdots+r_{L+1}>0}}
  \left(\frac{(r_1+\cdots+r_{L+1})!}{r_1!\cdots r_{L+1}!}\right)^{\!2}
  \frac{(p^2)^{-(r_1+\cdots+r_{L+1})}}{r_1+\cdots+r_{L+1}},
\end{equation}

\item[$\bullet$] In
  section~\ref{sec:finite4d}, we derive a dimension-lowering relation,
  where the
  all-equal-mass sunset integral in $D+2$ dimensions is obtained from 
  the sunset integral in $D$ dimensions by the action of a
  differential operator. We apply this relation to $\epsilon$ expansion of
  the all-equal-mass sunset in $D=4-2\epsilon$, and express the finite part in four dimensions from
derivative of the finite part in two dimensions. Therefore, the all-equal-mass sunset integral in $D=2-2\epsilon$ serves as boundary conditions for dimension-shifting relations that extend to $D = 4-2\epsilon$, enabling systematic computation of four-dimensional sunset integrals.

\end{itemize}

These results yield expressions for the multiloop sunset integrals
that avoid complicated transcendental functions and are readily
amenable to both formal and numerical analysis.

\bigskip

The result in equation~\eqref{e:IsunsetResultinto} is expected from  Weinberg's analysis of the
asymptotic behaviour of Feynman integrals in the large euclidean regime~\cite{Weinberg:1959nj}. 
If  $I_\Gamma(\lambda)$ is a Feynman integral where a subset of the
kinematic invariants are scaled by $\lambda$, the theorem in~\cite{Bergere:1977wz} states
that the asymptotic behaviour of $I_\Gamma(\lambda)$, for large
$\lambda$ and  Euclidean momentum, 
takes the form
\begin{equation}
\label{e:asymptotic}
  I_\Gamma(\lambda)\simeq \sum_{r=0}^{r_{\textrm{max}}}
  (\log(\lambda))^r \sum_{s\geq s_0} {g_{r,s}\over \lambda^s}
\end{equation}
where $r_{\textrm{max}}$ and $s_0$ are finite positive integers, and
$g_{r,s}$ depend on the internal masses and kinematic variables, but not on $\lambda$.
In the case of the multiloop massive sunset integral in two dimensions,
because the integral is convergent we have  $r_{\textrm{max}}=L$
the loop order, and the
expression  in equation~\eqref{e:asymptotic} is not an asymptotic expansion but
an \emph{exact} expression.
In this work, we derive the
expression in equation~\eqref{e:IsunsetResultinto} by determining all the coefficients in the expansion  to all
loop orders and all mass configurations using the Mellin transform method in
section~\ref{sec:Mellin}.
\medskip

The all-equal-mass result in equation~\eqref{e:equalmassintro} was proved for the case of the generic-mass two-loop
sunset graph in~\cite{Bloch:2016izu}.
At higher-loop order, our result establishes an
exact version of the asymptotic expansion observed numerically up to four loops
in~\cite{Klemm:2019dbm,Bonisch:2020qmm} and the leading term proved by~\cite{Kerr:2022,Iritani:2022}.

\bigskip

This paper has the following structure. In section~\ref{sec:Mellin} the
representation in equation~\eqref{e:IsunsetResultinto} is derived. In
section~\ref{sec:diffeq} we make some comment on the inhomogeneous
term of the minimal Picard-Fuchs operator with respect to $p^2$ satisfied by the sunset integral in
two dimensions. In section~\ref{sec:equalmass} we derive the
expression in equation~\eqref{e:equalmassintro} for the all-equal-mass case. In
section~\ref{sec:finite4d} we derive a dimension-lowering formula
expressing the all-equal-mass sunset integral in $D+2$ dimensions in
terms of the same 
same integral in
$D$ dimensions.
The Appendices~\ref{sec:3loop1mass}
and~\ref{sec:4loop1mass} contain the coefficients in
 equation~\eqref{e:equalmassintro} at three and four loops, respectively.

\bigskip

The representations of the multiloop sunset integral in two dimensions
in equation~\eqref{e:IsunsetResultinto} and equation~\eqref{e:equalmassintro} implemented in
\texttt{SageMath} codes~\cite{sage} are available in the
repository~\cite{repoAllLoopSunset}, as is a  \texttt{Maple} implementation of the dimension-lowering formula.

\section*{Acknowledgements}
We thank Leonardo de la Cruz for discussions and comments, and and
Mattias Sj\"o for a thorough reading of the manuscript and discussions.
The work of PV was funded by the Agence Nationale de la Recherche
(ANR) under the grant Observables (ANR-24-CE31-7996).
\section{Mellin representation}\label{sec:Mellin}

The Mellin transform method provides a systematic approach to evaluating
Feynman integrals by transforming them into contour integrals in the
complex plane. The key advantage of this representation is that it
exposes the analytic structure of the integral and allows us to
extract series expansions by residue calculus. The method used here
follows the techniques developed
in~\cite{Bergere:1973fq,Bergere:1977wz} and applied successfully to
various Feynman integrals as presented in textbooks~\cite{Smirnov:2004ym,Weinzierl:2022eaz}.  The sunset case is particularly elegant due to its complete convergence.

We start with the parametric representation of the
integral in equation~\eqref{e:sunsetdef} (see e.g.~\cite{Vanhove:2014wqa} for a derivation)
\begin{equation}\label{eq:parametric}
\intsunset^{(L)}(p^{2},\undernotation m^2)=\int_{\mathbb{R}_{+}^{L+1}}e^{-\sum_{i=1}^{L+1}m_{i}^{2}x_{i}+\frac{p^2}{\sum_{i=1}^{L+1}x_{i}^{-1}}}\frac{1}{\sum_{i=1}^{L+1}x_{i}^{-1}}\frac{dx_{1}\cdots dx_{L+1}}{x_{1}\cdots x_{L+1}}.
\end{equation}

\subsection{Mellin transform}

Setting $t=-p^2$, the Mellin transform of $\intsunset^{(L)}(t)$ is defined as
\begin{equation}\label{eq:mellindef}
\Msunset(\zeta):=\int_{0}^{\infty}\intsunset^{L}(-t,\undernotation m^2)\,t^{-1-\zeta}\,dt.
\end{equation}
Using equation~\eqref{eq:parametric} and the standard integral
$\int_{0}^{\infty}e^{-t/u}t^{-1-\zeta}dt=\Gamma(-\zeta)u^{-\zeta}$
valid for $\Real (\zeta)<0$, we obtain
\begin{equation}\label{eq:mellinintermediate}
\Msunset(\zeta)=\frac{\Gamma(-\zeta)}{\Gamma(1+\zeta)}\int_{\mathbb{R}_{+}^{L+1}}e^{-\sum_{i=1}^{L+1}m_{i}^{2}x_{i}}\left(\sum_{i=1}^{L+1}x_{i}^{-1}\right)^{-1-\zeta}\frac{dx_{1}\cdots dx_{L+1}}{x_{1}\cdots x_{L+1}}.
\end{equation}

To evaluate the remaining integral, we employ the Mellin
transformation given in~\cite[Eq.~8.4.2]{Gradshteyn:1965} 
\begin{equation}\label{eq:MBformula}
\frac{\Gamma(x)}{(A+B)^{x}}=\int_{-\infty}^{\infty}\frac{\Gamma(-\sigma-iy)}{A^{-\sigma-iy}}\frac{\Gamma(\sigma+iy+x)}{B^{\sigma+iy+x}}\frac{dy}{2\pi i},
\end{equation}
valid for $\Real(x),\Real(A),\Real(B)>0$ and $-\Real(x)<\sigma<0$.
Iterating this formula $L$ times for the factor $(\sum_{i=1}^{L+1}x_{i}^{-1})^{-1-\zeta}$ yields
\begin{equation}\label{eq:iteratedMB}
\left(\sum_{i=1}^{L+1}x_{i}^{-1}\right)^{-1-\zeta}=\frac{1}{\Gamma(1+\zeta)}\frac{1}{(2i\pi)^{L}}\int_{(i\mathbb{R})^{L}}\Gamma(1+\zeta+z_{1}+\cdots+z_{L})x_{L+1}^{1+\zeta}\prod_{i=1}^{L}\Gamma(-z_{i})\left(\frac{x_{L+1}}{x_{i}}\right)^{z_{i}}dz_{i}.
\end{equation}

Using $\int_{0}^{\infty}e^{-m^{2}x}x^{\alpha}\frac{dx}{x}=\frac{\Gamma(\alpha)}{(m^{2})^{\alpha}}$ for $\alpha\notin\mathbb{Z}_{\leq0}$, the parametric integrals evaluate to $\Gamma$-functions, giving the compact representation
\begin{equation}\label{eq:mellinfinal}
\Msunset(\zeta)=\frac{\Gamma(-\zeta)}{\Gamma(1+\zeta)}\frac{1}{(2i\pi)^{L}}\int_{(i\mathbb{R})^{L}}\frac{\Gamma(1+\zeta+z_{1}+\cdots+z_{L})^{2}}{m_{L+1}^{2(1+\zeta)}}\prod_{i=1}^{L}\left(\frac{m_{i}^{2}}{m_{L+1}^{2}}\right)^{z_{i}}\Gamma(-z_{i})^{2}dz_{i}.
\end{equation}

\subsection{Inverse Mellin transform and residue computation}

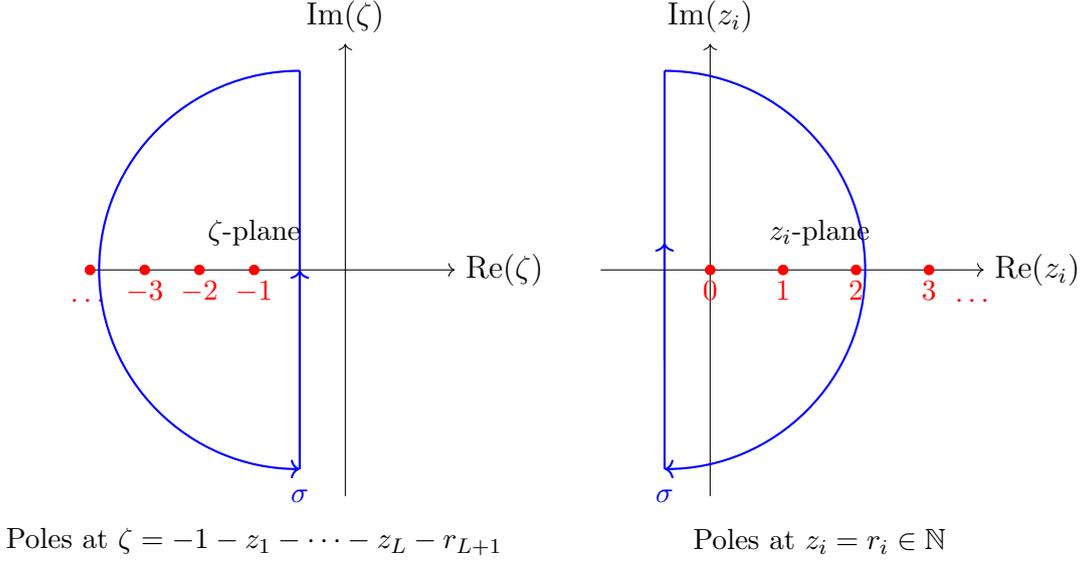
\begin{figure}[t]
\centering
\begin{tikzpicture}[scale=1.2]

\begin{scope}[xshift=-1cm]
  \draw[->] (-2.8,0) -- (1.2,0) node[right] {$\mathrm{Re}(\zeta)$};
  \draw[->] (0,-2.5) -- (0,2.5) node[above] {$\mathrm{Im}(\zeta)$};

  \foreach \n in {0,1,2,3}{
    \pgfmathsetmacro{\xpos}{-1-\n*0.6}
    \filldraw[red] (\xpos,0) circle (1.5pt);
  }
  \node[red, below] at (-1,0) {\small $-1$};
  \node[red, below] at (-1.6,0) {\small $-2$};
  \node[red, below] at (-2.2,0) {\small $-3$};
  \node[red] at (-2.8,-0.35) {\small $\cdots$};

  \draw[blue, thick] (-.5,-2.2) -- (-.5,2.2);
  \node[blue] at (-0.5,-2.5) {\small $\sigma$};

  \draw[blue, thick, ->] (-0.5,2.2) arc (90:270:2.2 and 2.2);
  \draw[blue, thick, ->] (-0.5,-2.2) -- (-0.5,0.0);

  \node at (-1, -3.0) {\small Poles at $\zeta = -1-z_1-\cdots-z_L - r_{L+1}$};
  \node[above] at (-1,0.15) {\small $\zeta$-plane};
\end{scope}

\begin{scope}[xshift=3cm]
  \draw[->] (-1.2,0) -- (3.0,0) node[right] {$\mathrm{Re}(z_i)$};
  \draw[->] (0,-2.5) -- (0,2.5) node[above] {$\mathrm{Im}(z_i)$};

  \foreach \n in {0,1,2,3}{
    \filldraw[red] (\n*0.8,0) circle (1.5pt);
  }
  \node[red, below] at (0,0) {\small $0$};
  \node[red, below] at (0.8,0) {\small $1$};
  \node[red, below] at (1.6,0) {\small $2$};
  \node[red, below] at (2.4,0) {\small $3$};
  \node[red] at (2.9,-0.35) {\small $\cdots$};

  \draw[blue, thick] (-0.5,-2.2) -- (-0.5,2.2);
  \node[blue] at (-0.5,-2.5) {\small $\sigma$};

  \draw[blue, thick, ->] (-0.5,2.2) arc (90:-90:2.2 and 2.2);
  \draw[blue, thick, ->] (-0.5,-1.0) -- (-0.5,0.3);

  \node at (1.2, -3.0) {\small Poles at $z_i = r_i \in \mathbb{N}$};
  \node[above] at (1.2,0.15) {\small $z_i$-plane};
\end{scope}

\end{tikzpicture}
\caption{Contour integration strategy for extracting the series expansion. \textbf{Left panel:} In the $\zeta$-plane, we close the contour to the left (toward $\Real(\zeta) \to -\infty$), capturing poles at $\zeta = -1-z_1-\cdots-z_L-r_{L+1}$ for all $r_{L+1} \in \mathbb{N}$. Each pole contributes a term in the final series. \textbf{Right panel:} In each $z_i$-plane, we close the contour to the right (toward $\Real(z_i) \to +\infty$), capturing poles at $z_i = r_i \in \mathbb{N}$. The $(L+1)$-dimensional residue computation yields the coefficients of the expansion. The contours can be closed because the $\Gamma$-functions ensure exponential decay in the appropriate half-planes.}
\label{fig:contour}
\end{figure}

Applying the inverse Mellin transform
\begin{equation}
\intsunset^{L}(t)=\frac{1}{2\pi i}\int_{i\mathbb{R}}\Msunset(\zeta)\,t^{\zeta}\,d\zeta,
\end{equation}
we obtain the $(L+1)$-fold contour integral representation
\begin{multline}\label{eq:multicontour}
\intsunset^{L}(p^2,\undernotation m^2)=-\frac{1}{p^2(2\pi i)^{L+1}}\int_{(i\mathbb{R})^{L+1}}\frac{\Gamma(-\zeta)\Gamma(1+\zeta+z_{1}+\cdots+z_{L})^{2}}{\Gamma(1+\zeta)}\\
\times\prod_{i=1}^{L}\left(\frac{m_{i}^{2}}{m_{L+1}^{2}}\right)^{z_{i}}\Gamma(-z_{i})^{2}\,dz_{i}\left(-\frac{p^2}{m_{L+1}^{2}}\right)^{1+\zeta}d\zeta.
\end{multline}

We evaluate this by closing contours and computing residues as
indicated in figure~\ref{fig:contour}:
\begin{itemize}
    \item Close the $\zeta$-contour to the \textbf{left}, picking poles at $\zeta=-1-z_{1}-\cdots-z_{L}-r_{L+1}$ for $r_{L+1}\in\mathbb{N}$;
    \item Close the $z_i$-contours to the \textbf{right}, picking poles at $z_{i}=r_{i}\in\mathbb{N}$ for $i=1,\ldots,L$.
\end{itemize}

The residue formula for a pole of order $k$ at $z = a$ is
\begin{equation}
\Res{f(z)}{z=a} = \frac{1}{(k-1)!} \lim_{z \to a} \frac{d^{k-1}}{dz^{k-1}} [(z-a)^k f(z)]
\end{equation}
which applies to the case of double poles gives
\begin{equation}\label{eq:residueformula}
\Res{\Gamma(-z)^{2}f(z)}{z=n}=\lim_{z\to n}\frac{d}{dz}\frac{f(z)}{\Gamma(1+z)^{2}},
\end{equation}
 leading to the single derivative per variable $z_i$.

Applying this to all $L+1$ variables yields the exact expansion:
\begin{multline}\label{eq:mellinres}
\intsunset^{L}(p^2,\undernotation m^2)=-\frac{1}{p^2}\sum_{(r_{1},\ldots,r_{L+1})\in\mathbb{N}^{L+1}}\lim_{(z_{1},\ldots,z_{L+1})\to(r_{1},\ldots,r_{L+1})}\\
\times\frac{\partial^{L+1}}{\partial z_{1}\cdots\partial z_{L+1}}\left[\frac{\Gamma(1+z_{1}+\cdots+z_{L+1})}{\Gamma(-z_{1}-\cdots-z_{L+1})}\prod_{i=1}^{L+1}\frac{1}{\Gamma(1+z_{i})^{2}}\left(-\frac{m_{i}^{2}}{p^2}\right)^{z_{i}}\right].
\end{multline}
The product structure then generates the symmetric polynomials through the multinomial theorem.

\medskip

We can now state  the main result:\footnote{An
  implementation is  given in
  the repository~\cite{repoAllLoopSunset}, where various cases are
  worked out.}
\begin{theorem}\label{thm:generic}
  The multiloop sunset integral has the exact expansion in equation~\eqref{e:IsunsetResult},
   where we capture the dependence on logarithms of $m_i^2/p^2$ via the logarithmic factors $\ell_i(r)$ defined in equation~\eqref{eq:elldef} and the symmetric polynomials $P^k_{L+1}$, whose generating function is~\eqref{eq:polydef}
\begin{multline}\label{e:IsunsetResult}
\intsunset^{(L)}(p^2,\undernotation m^2)=-{1\over p^2  } \sum_{(r_1,\dots,r_{L+1})\in\mathbb
  N^{L+1}}\, {(r_1+\cdots+r_{L+1})!^2\over r_1!^2\cdots r_{L+1}!^2}
\prod_{i=1}^{L+1}\left(-{m_i^2\over p^2}\right)^{r_i}\cr
\times
\sum_{k=1}^{L+1} c_k(r_1+\cdots +r_{L+1}) P_{L+1}^{L+1-k}\left(\ell_1(r_1),\cdots,\ell_{L+1}(r_{L+1})\right),
\end{multline}
for $n\in\mathbb N$, $c_0(n)=0$ and for $r\geq1$  
  \begin{multline}\label{e:Cdef}
 c_k(m):=  
    (-1)^{m+1}\sum_{n=0}^k \binom{k}{n}\sum_{a=0}^{n}
    \binom{n}{a} (-\pi^2)^{k-n-1\over2} \left(\frac{1 - (-1)^{k-n}}{2}\right)\cr
 \times   Y_a\left(\Psi^{(0)}(1+m),\dots,\Psi^{(a-1)}(1+m)\right)
    Y_{n-a}\left(\Psi^{(0)}(1+m),\dots,\Psi^{(n-a-1)}(1+m)\right)
  \end{multline}
where
\begin{equation}\label{e:psi-def}
  \Psi^{(k)}(m):=
  \begin{cases}
    \sum_{n=1}^{m-1} {1\over n}& \textrm{for}~k=0\\[2ex]
    (-1)^{k+1} k! \left(\zeta(k+1)-\sum_{n=1}^{m-1} {1\over n^{k+1}}\right)& \textrm{for}~k\geq1\,.
  \end{cases}
\end{equation}
  and $Y_m(x_1,\dots,x_m)$ is the complete Bell polynomial
\begin{equation}
  Y_m(x_1,\dots,x_m):=\sum_{l_1,\dots,l_m\geq0\atop
    l_1+2l_2+\cdots+ml_m=m}  {m!\over l_1!\cdots l_m!} \prod_{j=1}^m
  \left(x_j\over j!\right)^{l_j}  
\end{equation}
and we have introduced the  logarithmic factors
\begin{equation}\label{eq:elldef}
  \ell_{i}(r):=\log\left(-\frac{m_{i}^{2}}{p^2}\right)-2\sum_{n=1}^r{1\over
    n}\,,
\end{equation}
and  the symmetric polynomials $P_{L+1}^{k}$ via the generating function
\begin{equation}\label{eq:polydef}
\prod_{i=1}^{L+1}(1+tx_{i})=\sum_{k=0}^{L+1}P_{L+1}^{k}(x_{1},\ldots,x_{L+1})\,t^{k}\,.
\end{equation}
\end{theorem}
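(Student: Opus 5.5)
Starting from the residue expansion \eqref{eq:mellinres}, the proof reduces to evaluating, at each lattice point $(r_1,\dots,r_{L+1})\in\mathbb N^{L+1}$, the mixed derivative $\partial_{z_1}\cdots\partial_{z_{L+1}}$ of
\begin{equation*}
F(z):=G(z_1+\cdots+z_{L+1})\prod_{i=1}^{L+1}h_i(z_i),\qquad G(s):=\frac{\Gamma(1+s)}{\Gamma(-s)},\quad h_i(z):=\frac{1}{\Gamma(1+z)^2}\left(-\frac{m_i^2}{p^2}\right)^{z}.
\end{equation*}
The plan is to organise the Leibniz rule around the product structure. Since $G$ depends only on $s=z_1+\cdots+z_{L+1}$, each derivative acting on $G$ becomes $d/ds$. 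Distributing the $L+1$ first derivatives, a subset $S$ of size $k$ hits $G$, and the complement hits the corresponding $h_i$. This gives
\begin{equation*}
\partial_{z_1}\cdots\partial_{z_{L+1}}F=\sum_{S}G^{(|S|)}(s)\prod_{i\in S}h_i\prod_{i\notin S}h_i'.
\end{equation*}
Next, $h_i'(r_i)=h_i(r_i)\,\ell_i(r_i)$. Indeed, the logarithmic derivative of $h_i$ is $\log(-m_i^2/p^2)-2\psi(1+z)$, and the constant $-2\psi(1)=2\gamma_E$ is to be absorbed into the coefficients, so that the $\ell_i$ contain only harmonic numbers. Summing over $S$ of fixed size $k$ then produces exactly $P^{L+1-k}_{L+1}(\ell_1,\dots,\ell_{L+1})$ by \eqref{eq:polydef}. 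The prefactor $\prod_i h_i(r_i)$ gives $\prod_i(-m_i^2/p^2)^{r_i}/r_i!^2$.

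It remains to identify $G^{(k)}(m)$, with $m=\sum r_i$ and $\gamma_E$ shifts included, with $m!^2\,c_k(m)$. For this I use the reflection formula $\Gamma(-s)\Gamma(1+s)=-\pi/\sin(\pi s)$, which gives
\begin{equation*}
G(s)=-\frac{\Gamma(1+s)^2\sin(\pi s)}{\pi}.
\end{equation*}
Setting $s=m+\varepsilon$, we have $\sin(\pi s)=(-1)^m\sin(\pi\varepsilon)$. The function $\Gamma(1+m+\varepsilon)^2$ expands as $m!^2\exp\bigl(2\sum_j \psi^{(j-1)}(1+m)\varepsilon^j/j!\bigr)$. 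Writing this as the product of two copies of $\exp\bigl(\sum_j\psi^{(j-1)}(1+m)\varepsilon^j/j!\bigr)$ makes the Taylor coefficients products of complete Bell polynomials $Y_a\,Y_{n-a}$ with binomial weights $\binom na$.

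The factor $\sin(\pi\varepsilon)/\pi=\sum_{j\text{ odd}}(-\pi^2)^{(j-1)/2}\varepsilon^j/j!$ supplies the parity factor $\tfrac{1-(-1)^{k-n}}{2}$ and the power $(-\pi^2)^{(k-n-1)/2}$. Combining the two series by a Cauchy product with $\binom kn$ reproduces \eqref{e:Cdef}. Since $\sin(\pi\varepsilon)$ vanishes at $\varepsilon=0$, this also gives $c_0=0$.

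The main obstacle is the bookkeeping of conventions rather than the structure. First, the $\Psi^{(k)}$ in \eqref{e:psi-def} equal $\psi^{(k)}(m)$ with the $\gamma_E$ dropped from $\psi^{(0)}$. I must check that the $2\gamma_E$ removed from each $\ell_i$ and the $2\gamma_E$ removed from $G$'s expansion cancel. I expect this to work because the total exponent is $\exp(2\gamma_E(\varepsilon-\sum\varepsilon_i))$ with $\varepsilon=\sum\varepsilon_i$, but I will verify it carefully. Second, there are the overall sign $(-1)^{m+1}$ and the $-1/p^2$ prefactor inherited from \eqref{eq:multicontour}. Third, the justification of \eqref{eq:mellinres} itself, namely closing the contours and absolute convergence for $|p^2|>(\sum m_i)^2$, is taken from the preceding derivation. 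The convergence claim can be checked from the multinomial growth $\bigl(\sum r_i\bigr)!^2/\prod r_i!^2\le$ the coefficient of the expansion of $(m_1+\cdots+m_{L+1})^{2m}$ times polylogarithmic factors in $m$.
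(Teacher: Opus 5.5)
Your proposal is correct and follows essentially the same route as the paper's proof. The steps match: the Leibniz expansion in which every derivative on $G(z_1+\cdots+z_{L+1})$ becomes $d/ds$ and the subset sum produces $P^{L+1-k}_{L+1}(\ell_1,\dots,\ell_{L+1})$, then the identification of $G^{(k)}(m)$ with $m!^2c_k(m)$ via Leibniz and Fa\`a di Bruno, and finally the $\gamma_E$ cancellation. Your explicit reflection step $G(s)=-\Gamma(1+s)^2\sin(\pi s)/\pi$ is what produces the factors $(-1)^{m+1}$, $(-\pi^2)^{(k-n-1)/2}$ and the parity projector, which the paper states without derivation. Your $\gamma_E$ bookkeeping (multiplying $G$ by $e^{2\gamma_E s}$ and each $h_i$ by $e^{-2\gamma_E z_i}$, which leaves the product unchanged since $s=\sum_i z_i$) is the same cancellation the paper gets from the Weierstrass product, and it goes through as you expect.
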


\begin{proof} 
We first note that
\begin{multline}
    \frac{\partial^{L+1}}{\partial z_{1}\cdots\partial
      z_{L+1}}\left[\frac{\Gamma(1+z_{1}+\cdots+z_{L+1})}{\Gamma(-z_{1}-\cdots-z_{L+1})}\prod_{i=1}^{L+1}\frac{1}{\Gamma(1+z_{i})^{2}}\left(-\frac{m_{i}^{2}}{p^2}\right)^{z_{i}}\right]\cr=
   \left[ \sum_{r=0}^{L+1} {\partial^r\over \partial z_1^{r}}
     \frac{\Gamma(1+z_{1}+\cdots+z_{L+1})}{\Gamma(-z_{1}-\cdots-z_{L+1})}
   P_{L+1}^{L+1-r}\left(\ell_1(z_1),\cdots,\ell_{L+1}(z_{L+1}) \right)\right]\prod_{i=1}^{L+1}\frac{1}{\Gamma(1+z_{i})^{2}}\left(-\frac{m_{i}^{2}}{p^2}\right)^{z_{i}}
\end{multline}
with $ \ell_i(z_i)$ defined in equation~\eqref{eq:elldef}.  
We use that for $n\in\mathbb N$ and $k\in \mathbb N$
\begin{equation}\label{e:ratio-gamma}
  \lim_{z\to0}\, {d^k \over dz^k}
    {\Gamma(1+n+z)\over \Gamma(-n+z)} =n!^2c_k(n)
  \end{equation}
  with $c_0(n)=0$ because of the vanishing  from the $\Gamma$-function in
  the denominator, and for $k\geq1$, $c_k(n)$ is given by
  equation~\eqref{e:Cdef} by a direct use of Leibniz rules for
  derivatives and Fa\`a di Bruno's formula for the multiple derivative
  of composition of functions~\cite{comtet_advanced_combinatorics}
  \begin{equation}
    {d^m\over dx^m} g(f(x))=    \sum_{l_1,\dots,l_m\geq0\atop l_1+2l_2+\cdots+ml_m=m} {m!\over l_1!\cdots
        l_m!} g^{(l_1+\cdots+l_m)}(f(x))\, \prod_{k=1}^m \left(f^{(k)}(x)\over k!\right)^{l_k}
    \end{equation}
    with $g^{(k)}(x):=d^kg(x)/dx^k$.
   
Furthermore for $n\in\mathbb N$, we make use of 
\begin{equation}
  \lim_{z\to n} {d\over dz} {X^z\over \Gamma(1+z)^2}={X^n\over n!^2} \left(\log
    X+2\gamma_E-2\sum_{m=1}^n {1\over m}\right)\,.
\end{equation}

The  Euler-Gamma, $\gamma_E$, does not contribute to the
final expression, because its dependence on the $z_i$ cancels in  the ratio of $\Gamma$-functions
\begin{equation}
\Gamma(1+z_1+\cdots+z_{L+1})\over \Gamma(-z_1-\cdots-z_{L+1})\prod_{i=1}^{L+1}\Gamma(1+z_i)^2
\end{equation}
as can be seen using the product representation of the
$\Gamma$-function
\begin{equation}
  \Gamma(z)={e^{-\gamma_E z}\over z}  \prod_{n=1}^\infty
  \left(1+{z\over n}\right)^{-1}e^{z\over n}\,.
\end{equation}

Consequently, there is no Euler-Gamma dependence in $\Psi^{(0)}(1+m)$ in
equation~\eqref{e:psi-def}  nor in $\ell_i(r)$ in equation~\eqref{eq:elldef}.

\medskip

Collecting everything leads to the result in equation~\eqref{e:IsunsetResult}
and completes the proof.
\end{proof}

\medskip

\begin{remark}[Asymptotic of Feynman integrals]
  This series converges absolutely for $|p^2| > \left(m_1+\cdots+m_{L+1}\right)^2$ and represents the exact value of the integral, not merely an asymptotic expansion.
This result is the convergent case of the general asymptotic theorem
of Weinberg analysis in~\cite{Weinberg:1959nj} and the theorem of
Bergère, de Calan, and Malbouisson~\cite{Bergere:1977wz}: for the case
of the sunset integrals, the asymptotic expansion becomes exact with
$r_{\max}=L$ equal to the loop order.
\end{remark}

\section{Structure of the Picard--Fuchs differential equation}
\label{sec:diffeq}

The exact expansion derived in Section~\ref{sec:Mellin} is consistent
with the differential equation satisfied by the sunset
integral. In this section we make the relationship between the two
perspectives explicit. We show that the source term of the Picard--Fuchs
equation has a precise transcendental structure that mirrors the logarithmic
architecture of the expansion in Theorem~\ref{thm:generic}, and we identify
the special features of the all-equal-mass case that make it analytically
tractable at all loop orders. This analysis motivates and sets up the
construction of Section~\ref{sec:equalmass}.

\subsection{The inhomogeneous Picard--Fuchs equation}
\label{subsec:PF-inhom}

For generic masses the multiloop sunset integral satisfies the inhomogeneous
differential equation
\begin{equation}
  \mathscr{L}_L\, \intsunset^{(L)}(p^2,\undernotation m^2)
  \;=\;
  \mathscr  S_L(p^2,\undernotation m^2),
  \label{eq:PFeq}
\end{equation}
where the differential operator
\begin{equation}
  \mathscr{L}_L
  \;:=\;
  \sum_{r=0}^{O(L)} q_r(p^2,\undernotation m^2)
  \left(\frac{d}{dp^2}\right)^r
  \label{eq:PFop}
\end{equation}
is a Calabi--Yau differential operator~\cite{Bogner,Bloch:2016izu,Bloch:2014qca,Klemm:2019dbm,Bonisch:2020qmm,Bonisch:2021yfw,Lairez:2022zkj,Duhr:2022dxb,Maggio:2025jel}
of order
\begin{equation}
  O(L)
  \;=\;
  2^{L+1}-\binom{L+2}{\left\lfloor L+2\over 2\right\rfloor},
  \label{eq:PForder}\,.
\end{equation}
The coefficients $q_r(p^2,\undernotation m^2)$ are polynomials in $p^2$
and the internal masses $\undernotation m^2$. The real singularities
of the differential
operator are the roots of the leading coefficients
$q_{O(L)}(p^2,\undernotation m^2)$  
which are located at the roots of the discriminant locus
\begin{equation}
  \Delta_L(p^2,\undernotation m^2)
  \;=\;
  p^2
  \prod_{\epsilon\in\{-1,+1\}^{L+1}}
  \!\!\left(p^2 - \Bigl(\sum_{i=1}^{L+1}\epsilon_i m_i\Bigr)^2\right).
  \label{eq:discriminant}
\end{equation}
This discriminant is precisely the singular locus of the \emph{sunset
hypersurface}
\begin{equation}
  \left(\sum_{i=1}^{L+1}\frac{1}{x_i}\right)
  \!\left(\sum_{i=1}^{L+1}m_i^2\,x_i\right)
  \;=\; p^2,
  \label{eq:sunset-hyp}
\end{equation}
which defines the underlying Calabi--Yau variety. The reduction from the
naive dimension $2^L$ to $O(L)$ in \eqref{eq:PForder} reflects algebraic
relations among the periods in integer dimension~\cite{Tancredi:2015pta,Lairez:2022zkj,delaCruz:2024xit,Lellouch:2025rnz}.

\subsection{The holomorphic period}
\label{subsec:hol-period}

The homogeneous equation $\mathscr{L}_L\, \persunset=0$ admits the
analytic solution near $p^2=\infty$ given by the period integral~\cite{Vanhove:2018mto}
\begin{equation}
 \persunset^{(L)}(p^2,\undernotation m^2)
  \;:=\;
  \int_{|x_1|=\cdots=|x_{L+1}|=1}
  \frac{dx_1\cdots dx_{L+1}}
  {p^2
   - \bigl(\sum_{i=1}^{L+1}m_i^2 x_i\bigr)
     \bigl(\sum_{i=1}^{L+1}x_i^{-1}\bigr)},
  \label{eq:hol-period}
\end{equation}
which arises as the \emph{maximal cut}—the highest-transcendental-weight
part—of the Feynman
integral~\cite{Primo:2016ebd,Primo:2017ipr,Frellesvig:2017aai,Harley:2017qut}. By
definition it satisfies $\mathscr{L}_L\, \persunset=0$ and has the series expansion in generalised Ap\'ery
numbers~\cite{Verrill2004}
\begin{equation}
 \persunset^{(L)}(p^2,\undernotation m^2)
  \;=\;
  \frac{1}{p^2}
  \sum_{(r_1,\ldots,r_{L+1})\in\mathbb{N}^{L+1}}
  \left(\frac{(r_1+\cdots+r_{L+1})!}
             {r_1!\cdots r_{L+1}!}\right)^{\!2}
  \prod_{i=1}^{L+1}\left(\frac{m_i^2}{p^2}\right)^{\!r_i}.
  \label{eq:period-series}
\end{equation}

\subsection{Structure of the source term}
\label{subsec:source}

Given the exact expansion of Theorem~\ref{thm:generic}, we can extract
the precise form of the source term by substituting equation~\eqref{e:IsunsetResult}
into equation~\eqref{eq:PFeq}. The key identity is
\begin{equation}
 \frac{d^r}{(dp^2)^r}
  \bigl(\log p^2\bigr)^{L}
  \;=\;
  \frac{L!}{(L-r)!}
  \frac{\bigl(\log p^2\bigr)^{L-r}}{(p^2)^r}
  \;+\;
  \text{lower powers of }\log p^2,
  \label{eq:logderiv}
\end{equation}
which shows that applying $\mathscr{L}_L$ to a term of the form
$\persunset\cdot(\log p^2)^L$ produces logarithmic powers up to
$(\log p^2)^{L-1}$ on the right-hand side.

\medskip

\begin{proposition}[Structure of the source term]
\label{prop:source}
The source term $\mathscr S_L(p^2,\undernotation m^2)$ in equation~\eqref{eq:PFeq} takes
the form
\begin{equation}
 \mathscr S_L(p^2,\undernotation m^2)
  \;=\;
  P(p^2,\undernotation m^2)
  \;+\;
  \sum_{a=0}^{L-1}
  s_a(p^2,\undernotation m^2)\,(\log p^2)^a,
  \label{eq:source-generic}
\end{equation}
where $P$ and $s_a$ are polynomials in $p^2$. For generic masses, the
coefficients $s_a$ contain logarithms $\log m_i^2$, reflecting the explicit
mass dependence in the variables $\ell_i(r)$ defined in equation~\eqref{eq:elldef}.

In the all-equal-mass case $m_1=\cdots=m_{L+1}=1$, all mass logarithms
vanish identically. The differential operator reduces to order $L$, and the
source term simplifies to the constant
\begin{equation}
 \mathscr S_L(p^2,\mathbf{1}) \;=\; -(L+1)!\,.
  \label{eq:source-equal}
\end{equation}
\end{proposition}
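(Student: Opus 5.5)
Substitute the exact expansion of Theorem~\ref{thm:generic} into $\mathscr{L}_L$. The integrand factors into a power series in $1/p^2$ times polynomials in the $\ell_i(r_i)$. Each $\ell_i(r)$ depends on $p^2$ only through $-\log(-p^2)$, since $\log(-m_i^2/p^2)=\log m_i^2-\log(-p^2)$. So $\intsunset^{(L)}$ can be reorganised as
\[
\intsunset^{(L)}=\sum_{a=0}^{L}\Phi_a(p^2,\undernotation m^2)\,(\log p^2)^a ,
\]
where the $\Phi_a$ are convergent series in $1/p^2$ and the coefficients involve $\log m_i^2$, $i\pi$, zeta values and harmonic sums. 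The first step is to identify the top coefficient. The top power $(\log p^2)^{L+1}$ in $P^{L+1}_{L+1}$ comes with $c_0=0$, so it drops out. The power $(\log p^2)^L$ then comes only from $c_1(n)P^L_{L+1}$. Since $\lim_{z\to0}\frac{d}{dz}\Gamma(1+n+z)/\Gamma(-n+z)=(-1)^n n!^2$, this shows $\Phi_L$ is a constant multiple of $\persunset^{(L)}$ in \eqref{eq:period-series}, up to the sign $(-1)^{\sum r_i}$ absorbed by $-m_i^2/p^2$. I would double-check this sign bookkeeping.

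Next, apply $\mathscr{L}_L$ term by term, using the Leibniz rule and \eqref{eq:logderiv}. The coefficient of $(\log p^2)^L$ in $\mathscr{L}_L\intsunset^{(L)}$ is $\mathscr{L}_L\Phi_L\propto\mathscr{L}_L\persunset=0$. The next stage is the main obstacle. I would argue as in the Frobenius method at the point of maximal unipotent monodromy $p^2=\infty$. Monodromy around $p^2=\infty$ shifts $\log p^2\to\log p^2+2\pi i$. Hence $\mathscr{L}_L\intsunset^{(L)}$, being the integral with a single-valued rational integrand, has trivial monodromy. Applying $(M-1)^{k}$ to it and then using the fact that $\intsunset^{(L)}$ minus suitable period combinations is annihilated shows that every coefficient series $\mathscr{L}_L\Phi_a$ combines into something with finite monodromy. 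The step I expect to be hardest is showing the resulting coefficients $s_a$ are \emph{polynomials} rather than merely series in $1/p^2$. For this I would use the independent parametric derivation: differentiating \eqref{eq:parametric} under the integral gives $\mathscr{L}_L\intsunset^{(L)}$ as a total-derivative boundary term, i.e.\ integrals over the faces $x_i=0$. These boundary terms are lower-loop tadpole-type integrals, and in two dimensions they evaluate to polynomials times products of $\log m_i^2$ and $\log p^2$ of degree at most $L-1$. Matching with the series then fixes the form \eqref{eq:source-generic}. The mass logarithms enter only through $\ell_i(r)$, which gives the stated $\log m_i^2$ dependence.

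For the equal-mass case, all $\log m_i^2=0$ and the polynomials $P^k_{L+1}$ become symmetric functions of identical-structure variables. I would invoke the known result that the minimal operator factors to order $L$, namely the Calabi--Yau operator of the Bessel moments. The source is then determined by the $p^2\to\infty$ behaviour. Applying $\mathscr{L}_L$ to the leading term $-(L+1)(-R)^L\,dR/dp^2$ of the expansion, i.e.\ $\tfrac{(L+1)}{p^2}(\log p^2)^L$ at leading order, the normalised top derivative $(p^2)^{L}\!\left(\frac{d}{dp^2}\right)^{L}$ contributes $L!\,(L+1)$ times a sign. All logarithmic terms must cancel by the monodromy argument above. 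The remaining constant is therefore $-(L+1)!$, which I would fix by comparing leading coefficients and checking against $L=1$, where $\intsunset^{(1)}$ is elementary.
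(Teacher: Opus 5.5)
Your route differs from the paper's. The paper gives no general derivation of the constant: it asserts polynomiality ``by construction'' and takes $-(L+1)!$ from a threshold boundary condition together with the explicit operators at $L=2,\dots,5$. Your treatment of the top logarithm (only $c_1P^L_{L+1}$ contributes, and $\mathscr L_L\persunset=0$ kills it) matches the paper's remark around \eqref{eq:logderiv}. Reading the constant off the large-$p^2$ behaviour is the right mechanism and is uniform in $L$. It works cleanly through the indicial structure at infinity, not through the top derivative plus monodromy. Set $\theta=p^2\,d/dp^2$ and take $c_L(p^2)=(p^2)^{L+1}+\dots$ as in \eqref{eq:leading-coeff}. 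Maximal unipotent monodromy at $p^2=\infty$ then means $\mathscr L_L=p^2(\theta+1)^L+\sum_{j\ge1}(p^2)^{1-j}P_j(\theta)$, and $p^2(\theta+1)^L\bigl[g(\log p^2)/p^2\bigr]=g^{(L)}(\log p^2)$. Writing $\intsunset^{(L)}=\sum_{n\ge0}(p^2)^{-1-n}g_n(\log p^2)$ with $g_0=A(\log p^2)^L+\dots$, you get $\mathscr L_L\intsunset^{(L)}=L!\,A+O\bigl((\log p^2)^L/p^2\bigr)$. Any source of the form \eqref{eq:source-generic} must therefore be the constant $L!\,A$, and the logarithms drop out with no monodromy input. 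Two small corrections to your version: an $O(1)$ term needs $(p^2)^{L+1}$, not $(p^2)^L$, in front of $(d/dp^2)^L$; and the constant equals $L!A$ only because the lower-order part of the indicial operator vanishes.

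As written, three steps need repair. (i) Your claim that $\mathscr L_L\intsunset^{(L)}$ has trivial monodromy ``because the integrand is single-valued and rational'' proves too much: it would make $\intsunset^{(L)}$ itself single-valued. The valid reason is that the variation of $\intsunset^{(L)}$ around any loop is a combination of periods, which $\mathscr L_L$ annihilates; with the indicial argument above you do not need it at all. (ii) The boundary terms of a telescoping reduction of \eqref{eq:parametric} sit on faces $x_i=0$, where $e^{p^2/\sum_j x_j^{-1}}\to1$. They are tadpole-type and carry no $\log p^2$, so your claim that they produce $(\log p^2)^a$ with $a\le L-1$ is wrong; done correctly you would get the stronger $s_a=0$ for $a\ge1$. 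What neither you nor the paper proves is that the certificate coefficients are polynomial rather than merely rational in $p^2$, and that is exactly the input your large-$p^2$ argument needs. (iii) The sign. With $R^{(L)}$ as in \eqref{eq:flatcoord} one has $dR^{(L)}/dp^2=-1/p^2+\dots$, so your leading term $+\tfrac{L+1}{p^2}(\log p^2)^L$ would give $+(L+1)!$. The paper's formulas do not settle this: your $c_1(n)=(-1)^n$ from \eqref{e:ratio-gamma} disagrees with \eqref{e:Cdef}. With either choice, the leading coefficient read off from \eqref{e:IsunsetResult} is off by the $L$-dependent factor $c_1(0)(-1)^L$, so a check at $L=1$ cannot fix it. Fix it independently. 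For $p^2=-t<0$ the integrand of \eqref{eq:parametric} is positive, and $\intsunset^{(L)}(-t)\simeq(L+1)(\log t)^L/t$. Hence $A=-(L+1)$ and $\mathscr S_L(p^2,\mathbf{1})=L!\,A=-(L+1)!$.
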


\begin{proof}
In the all-equal-mass case, the variables $\ell_i(r)$ in equation~\eqref{eq:elldef}
are all equal to $\log(1/p^2)-2\sum_{n=1}^r n^{-1}$, so that no individual mass
logarithm $\log m_i^2$ appears. The source term is then polynomial in $p^2$
by construction; that it equals the specific constant $-(L+1)!$ follows from
the boundary condition at the threshold $p^2=(L+1)^2$, and is verified by
explicit computation of the differential equations at two through five loops~\cite{Bloch:2016izu,Remiddi:2016gno,Muller-Stach:2011qkg,Lairez:2022zkj} 
available in the repository~\cite{repoPF}.
\end{proof}

\begin{remark}
The cancellation of all mass logarithms and the reduction to a constant
source term in the equal-mass case is not merely a computational
simplification: it reflects the higher symmetry of the equal-mass sunset
variety, which acquires the full symmetric group $\mathfrak{S}_{L+1}$
as an automorphism group. This symmetry constrains the differential equation
drastically and is what ultimately allows the exact all-order treatment of
Section~\ref{sec:equalmass}.
\end{remark}

\subsection{The one-loop case as a prototype}
\label{subsec:1loop}

The one-loop case is fully explicit and illustrates all the structures that
will appear at higher loop order. The integral evaluates to
\begin{equation}
 \intsunset^{(1)}(p^2,\undernotation m^2)
  \;=\;
  \persunset^{(1)}(p^2,\undernotation m^2)
  \Bigl(-2R^{(1)}(p^2) + \log(4m_1^2 m_2^2)\Bigr),
  \label{eq:1loop}
\end{equation}
with holomorphic period
\begin{equation}
  \persunset^{(1)}(p^2,\undernotation m^2)
  \;=\;
  \frac{1}{\sqrt{(m_1^2+m_2^2-p^2)^2-4m_1^2m_2^2}},
  \label{eq:1loop-period}
\end{equation}
and flat coordinate
\begin{equation}
  R^{(1)}(p^2)
  \;=\;
  -\log\frac{m_1^2+m_2^2-p^2
    +\sqrt{(m_1^2+m_2^2-p^2)^2-4m_1^2m_2^2}}{2},
  \label{eq:1loop-R}
\end{equation}
satisfying
\begin{equation}
  \frac{dR^{(1)}}{dp^2} \;=\; \persunset^{(1)}(p^2,\undernotation m^2).
  \label{eq:1loop-mirror}
\end{equation}

The structure of equation~\eqref{eq:1loop} is the prototype for the all-loop result:
the integral is a product of the holomorphic period $\persunset^{(L)}$ and
an analytic function $R^{(1)}$.

\section{The all-equal-mass case}
\label{sec:equalmass}

The analysis of Section~\ref{sec:diffeq} singles out the all-equal-mass
configuration as the case where the Picard--Fuchs equation takes its
simplest form: a constant source term $-(L+1)!$ and a differential operator
of minimal order $L$. We now exploit this symmetry to derive a
closed-form solution for the sunset integral at all loop orders.

The central tool is the \emph{mirror map} and its associated
\emph{flat coordinate}
\begin{equation}
  R^{(L)}(p^2)
  \;=\;i \pi
  -\log(p^2)
  \;+\;
  \sum_{\substack{(r_1,\ldots,r_{L+1})\in\mathbb{N}^{L+1}\\
                  r_1+\cdots+r_{L+1}>0}}
  \left(\frac{(r_1+\cdots+r_{L+1})!}{r_1!\cdots r_{L+1}!}\right)^{\!2}
  \frac{1}{r_1+\cdots+r_{L+1}}
  \prod_{i=1}^{L+1}\frac{1}{(p^2)^{r_i}},
  \label{eq:flatcoord}
\end{equation}
such that its derivative gives the holomorphic period
\begin{equation}
  \frac{dR^{(L)}(p^2)}{dp^2}
  \;=\;
 \persunset^{(L)}(p^2,\mathbf{1})\,.
  \label{eq:flatcoord-def}
\end{equation}

Near $p^2=\infty$ we have
$R^{(L)}(p^2)\sim -\log(p^2)+O(1/p^2)$, so that $e^{R^{(L)}} \sim 1/p^2$
serves as a natural expansion parameter; the regime $|e^{R^{(L)}}|<1$
corresponds to large Euclidean momentum $|p^2|$ beyond the normal threshold.

The leading coefficient of the all-equal-mass differential operator is
given by~\cite{Vanhove:2014wqa} 
\begin{equation}
  c_L(p^2)
  \;:=\;
  \left.\mathscr{L}_L(p^2,\mathbf{1})\right|_{(d/dp^2)^L}
  \;=\;
  \begin{cases}
     (p^2)^{l}(p^2-1^2)(p^2-3^2)\cdots(p^2-(2l+1)^2)
     & L=2l, \\[4pt]
   (p^2)^{l+1}(p^2-2^2)(p^2-4^2)\cdots(p^2-(2l+2)^2)
      & L=2l+1.
  \end{cases}
  \label{eq:leading-coeff}
\end{equation}
For large $p^2$ one has $c_L(p^2)=(p^2)^{L+1}+\mathcal O((p^2)^{L})$.

\begin{proposition}
\label{prop:identity}
The holomorphic period and the flat coordinate satisfy the identity
\begin{equation}
  c_L(p^2)\,
  \bigl(\persunset^{(L)}(p^2,\mathbf{1})\bigr)^{L+1}
  \left(1 + \sum_{n\geq 1} (-1)^na^L(n)\,e^{nR^{(L)}(p^2)}\right)
  \;=\; 1,
  \label{eq:period-identity}
\end{equation}
where $a^L(n)\in\mathbb{Z}$ are integers. The first values for $L=2,3,4$
are listed in Table~\ref{tab:all_sequences}.
\end{proposition}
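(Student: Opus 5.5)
The plan is to treat \eqref{eq:period-identity} as an identity of power series in the local coordinate $q:=e^{R^{(L)}(p^2)}$ near the point of maximal unipotent monodromy $p^2=\infty$. We will establish it in three stages. First, the quantity $c_L(p^2)\,(\persunset^{(L)})^{L+1}$ is a power series in $z:=1/p^2$ with constant term $1$. Second, $z$ can be re-expanded as a power series in $q$. Third, the resulting coefficients are integers. The prefactor $(-1)^n$ is then just a sign convention coming from $e^{i\pi}=-1$ in the definition \eqref{eq:flatcoord}.

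\textbf{Step 1.} By \eqref{eq:period-series} with all masses equal to $1$, we have $\persunset^{(L)}(p^2,\mathbf 1)=z\sum_{n\ge0}A_n z^n$. Here
\[
A_n=\sum_{r_1+\cdots+r_{L+1}=n}\Bigl(\tfrac{n!}{r_1!\cdots r_{L+1}!}\Bigr)^2\in\mathbb Z,
\]
and $A_0=1$. By \eqref{eq:leading-coeff}, $c_L(p^2)=(p^2)^{L+1}\,\widetilde c_L(z)$, where $\widetilde c_L$ is a polynomial in $z$ with integer coefficients and $\widetilde c_L(0)=1$. Hence
\[
c_L\,(\persunset^{(L)})^{L+1}=\widetilde c_L(z)\Bigl(\sum_n A_nz^n\Bigr)^{L+1}=1+O(z)\in 1+z\,\mathbb Z[[z]].
\]
Its multiplicative inverse therefore also lies in $1+z\,\mathbb Z[[z]]$. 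This inverse is the object that should equal $1+\sum_n(-1)^na^L(n)q^n$.

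\textbf{Step 2 (mirror map).} From \eqref{eq:flatcoord} we get
\[
q=e^{R^{(L)}}=-z\,\exp\Bigl(\sum_{n\ge1}\frac{A_n}{n}z^n\Bigr).
\]
This is a power series in $z$ with leading term $-z$, so it can be inverted formally to give $z=z(q)=-q+O(q^2)$. Substituting $z(q)$ into the series from Step 1 produces a power series in $q$ (equivalently in $-q$) with constant term $1$, and this is the claimed identity. Since both sides are analytic, it also holds as an identity of functions in the region $|q|<1$ discussed after \eqref{eq:flatcoord-def}, beyond the normal threshold. A useful cross-check is that the Wronskian of the Frobenius basis $\{\persunset (R^{(L)})^k/k!\}_{k<L}$ equals $(\persunset)^{L(L+1)/2}$, because $dR/dp^2=\persunset$. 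Abel's formula for $\mathscr L_L$ expresses this Wronskian through $c_L$. The identity is then the usual statement that the Yukawa coupling $1/(c_L\persunset^{L+1})$ is a $q$-series normalised to $1$.

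\textbf{Step 3 (integrality, the main obstacle).} We need to show that $a^L(n)\in\mathbb Z$. Given Step 1, it suffices to show that $z(q)\in\mathbb Z[[q]]$. Since the leading coefficient of $q(z)$ is $-1$, a unit, Lagrange inversion of an integral series with unit leading term stays integral. The problem therefore reduces to showing that $\exp\bigl(\sum_n A_n z^n/n\bigr)\in\mathbb Z[[z]]$. By Dwork's lemma, this is equivalent to the Gauss congruences $A_{mp^s}\equiv A_{mp^{s-1}}\pmod{p^s}$ for all primes $p$. To get these, we use that $A_n$ is the constant term of $\Lambda^n$ with
\[
\Lambda=\Bigl(\sum_i x_i\Bigr)\Bigl(\sum_i x_i^{-1}\Bigr),
\]
which is the Laurent polynomial defining the sunset hypersurface \eqref{eq:sunset-hyp}. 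Constant-term sequences of Laurent polynomials satisfy the Gauss congruences (Samol--van Straten, Mellit--Vlasenko). Alternatively, one can invoke the Krattenthaler--Rivoal integrality theorem for mirror maps of multinomial type. This step is where the real arithmetic content lies. In the paper I would cite it rather than reprove it, and I would check the first coefficients against Table~\ref{tab:all_sequences}.
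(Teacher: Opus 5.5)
Your proof is correct, and it supplies something the paper does not. The paper gives no argument for this proposition: it cites \cite{Bloch:2016izu} for $L=2$, lists computed values in Table~\ref{tab:all_sequences}, and appeals to mirror symmetry in Remark~\ref{rmk:mirror}. You make explicit that, once $c_L\,(\persunset^{(L)})^{L+1}\in 1+z\,\mathbb Z[[z]]$ is checked, the identity simply defines the $a^L(n)$. They are the coefficients of the inverse of that series re-expanded in $q=e^{R^{(L)}}=-z\exp\bigl(\sum_{n\ge1}A_nz^n/n\bigr)$. The only arithmetic input is then integrality of the inverse of this local mirror map, which reduces to Gauss congruences for $A_n=\mathrm{CT}(\Lambda^n)$. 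Your argument is uniform in $L$ and elementary, and it reproduces the table: for example $a^2(2)=7$, $a^3(2)=48$, $a^4(2)=191$. What the mirror-symmetry viewpoint would add, and your argument does not see, is finer structure. At $L=2$ one has $a^2(n)=(-1)^{n+1}\sum_{d\mid n}(n/d)^3\,n_{n/d}$ with integer virtual invariants $n_l$, which is strictly stronger than $a^2(n)\in\mathbb Z$. It would also give a geometric reading of the growth of the $a^L(n)$.

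Some side remarks should be corrected, though none of them affects Steps 1--3.

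The Gauss congruences hold for $\mathrm{CT}(\Lambda^n)$ for every Laurent polynomial with integer coefficients. This follows directly from $\Lambda(x)^{mp^s}\equiv\Lambda(x^p)^{mp^{s-1}}\pmod{p^s}$ and the invariance of the constant term under $x\mapsto x^p$. Samol--van Straten and Mellit--Vlasenko prove the stronger Dwork congruences, and Krattenthaler--Rivoal treats ratio-of-periods mirror maps $z\exp(G/F)$, so you need none of these references.

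The identity does not hold as an identity of functions on all of $|q|<1$. The table shows $|a^L(n)|$ growing roughly like $5^n$, $11^n$, $19^n$ for $L=2,3,4$, so the $q$-series has radius of convergence well below $1$.

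The Wronskian cross-check is wrong as stated, because $\persunset\,(R^{(L)})^k/k!$ are not solutions of $\mathscr L_L$. Already at $L=2$: if $\persunset R^{(2)}$ were a solution, Abel's formula would force $c_2\,(\persunset^{(2)})^3$ to be constant, whereas it equals $1-z-9z^2+\cdots$. Compare the $e^{nR^{(L)}}$ corrections in \eqref{eq:Frob-R}. You should drop this check or redo it with the actual Frobenius basis.
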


The integrality of the coefficients $a^L(n)$, together with their purely
exponential growth $(c_L\cdot(2L+2)!/L!^2)^n$ with small polynomial
corrections (visible in Table~\ref{tab:all_sequences}), is a hallmark of mirror
symmetry and will be explained in Remark~\ref{rmk:mirror}.

\subsection{Frobenius basis near \texorpdfstring{$p^2=\infty$}{p2=infty}}
\label{subsec:Frobenius}

\begin{definition}[Frobenius basis]
\label{def:Frobenius}
A Frobenius basis of solutions near $p^2=\infty$ is defined by
\begin{equation}
  \Frob_r^{(L)}(p^2)
  \;:=\;
  \frac{1}{r!}\,\persunset^{(L)}(p^2,\mathbf{1})\,
  \bigl(-i\pi+\log p^2\bigr)^r
  \;+\;
 \sum_{n\geq 1}{c_n\over (p^2)^n},
  \qquad 0\leq r\leq L-1,
  \label{eq:Frob-def}
\end{equation}
where $c_n$ are rational numbers.  We make the choice the principal
determination of the logarithm so that $\log(-p^2)=\log(p^2)-i\pi$
with the expansion performed in the region where $-p^2>0$.
\end{definition}

\medskip

In terms of the flat coordinate $R^{(L)}$, the Frobenius basis elements
admit the representation
\begin{equation}
  \Frob_r^{(L)}(p^2)
  \;=\;
  \frac{1}{r!}\,\persunset^{(L)}(p^2,\mathbf{1})\,
  \bigl(-R^{(L)}(p^2)\bigr)^r
  \;+\;
 \sum_{n\geq 1}\tilde{c}_n\,e^{nR^{(L)}(p^2)},
  \label{eq:Frob-R}
\end{equation}
where $\tilde{c}_n$ are rational numbers.

\medskip

\begin{remark}[Relation to the Frobenius basis of \cite{Bonisch:2021yfw}]
The Frobenius basis defined here coincides with the basis $g^L_i(p^2)$
used in~\cite{Bonisch:2021yfw} via $\Frob_r^{(L)} = g^L_r$.
The coefficients in that reference were determined numerically by analytic
continuation from below the normal threshold $p^2<(L+1)^2$; the present
paper provides them analytically to all orders via the Mellin representation.
\end{remark}

\subsection{The all-equal-mass sunset at all loop orders}
\label{subsec:main-theorem}

In this section we derive the expression of  the
all-equal-mass multiloop sunset integral in the using the mirror flat
coordinate $R^{(L)}(p^2)$.

\begin{theorem}[All-equal-mass case]
\label{thm:equalmass}
The all-equal-mass multiloop sunset integral is given by\footnote{An
  implementation of this expression is given in the worksheet \href{https://github.com/pierrevanhove/AllLoopSunset/blob/main/SingleMassCase-from-mellin.ipynb}{SingleMassCase-from-mellin.ipynb} in
  the repository~\cite{repoAllLoopSunset}. }
\begin{multline}
  \intsunset^{(L)}(p^2,\mathbf{1})
  \;=\;
 \persunset^{(L)}(p^2,\mathbf{1})
  \left(
    -(L+1)\bigl(-R^{(L)}(p^2)\bigr)^L
    + \sum_{l=1}^{\infty}
      e^{lR^{(L)}(p^2)}
      \sum_{r=0}^{L-1}
      d^{(L)}_r(l)\,\bigl(R^{(L)}(p^2)\bigr)^r
  \right)\cr
  + \sum_{r=0}^{L-1}
    \alpha_r^{(L)}\,\Frob_r^{(L)}(p^2),
  \label{eq:main-equalmass}
\end{multline}
where:
\begin{itemize}
  \item $d^{(L)}_r(l)\in\mathbb{Q}$ are rational coefficients
  \item $\alpha^{(1)}_0=\alpha_0^{(2)}=\alpha_1^{(2)}=0$ and for
    $L\geq 3$, $\alpha^{(L)}_r \in \mathbb{Q}[\zeta(3),\zeta(5),\zeta(7),\ldots]$
    are determined by the generating function
    \begin{equation}
      \label{e:Gdef}
      G(x,\lambda)=\sum_{n=0}^\infty\sum_{r=0}^{n}{\alpha^{(n+3)}_r
        \lambda^r x^n\over (n+4)!} =
      {2Q(x)\over 1+\lambda x}\, e^{x^3Q(x)}; \qquad
Q(x)=      \sum_{k=1}^\infty {\zeta(2k+1)\over2k+1}  x^{2k-2}    
    \end{equation}
  \item the series converges for $|e^{R^{(L)}(p^2)}|<1$, i.e.\ for
        $|p^2|$ sufficiently large beyond the normal threshold $(L+1)^2$.
\end{itemize}
\end{theorem}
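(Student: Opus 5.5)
The plan is to begin from Theorem~\ref{thm:generic} specialised to $m_1=\cdots=m_{L+1}=1$ and then change variables from $\log(-1/p^2)$ to the flat coordinate $R^{(L)}(p^2)$. In the equal-mass case every $\ell_i(r_i)$ equals $\log(-1/p^2)-2H_{r_i}$, where $H_r=\sum_{n\le r}1/n$. Summing over compositions of $N=r_1+\cdots+r_{L+1}$, the integral takes the form
\[
\intsunset^{(L)}=-\frac1{p^2}\sum_{j=0}^{L}\bigl(-R_0\bigr)^j A_j(p^2),
\qquad R_0:=i\pi-\log p^2 .
\]
Each $A_j$ is a power series in $1/p^2$ whose coefficients are rational combinations of harmonic-type sums and of the $c_k(N)$. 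The top coefficient $A_L$ comes only from $c_1(N)P^L_{L+1}$. Since $c_1(N)=(-1)^{N+1}$ up to the sign conventions, $A_L$ reproduces $(L+1)$ times the Apéry series~\eqref{eq:period-series}, that is $(L+1)\persunset^{(L)}$. This fixes the leading term $-(L+1)\persunset(-R)^L$ once $R_0$ is replaced by $R^{(L)}$.

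Next I would use the structure of the differential equation. The difference
\[
\intsunset^{(L)}-\persunset^{(L)}\Bigl(-(L+1)(-R^{(L)})^L+\sum_{l\ge1}e^{lR^{(L)}}\sum_{r<L}d_r(l)(R^{(L)})^r\Bigr)
\]
should be annihilated by $\mathscr L_L$ for a suitable choice of the rational $d_r(l)$. To show this, I rewrite $\mathscr L_L$ in the flat coordinate. By Proposition~\ref{prop:identity}, $\mathscr L_L=c_L\,\persunset^{-1}\cdots$, and more concretely the standard Calabi--Yau factorisation gives $\mathscr L_L = c_L\,\persunset^{-?}\,(\text{Yukawa-type factors})\,\frac{d^L}{dR^L}\frac1{\persunset}$ up to the precise ordering. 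Applying $\mathscr L_L$ to $\persunset\,(-R)^L$ gives $(-1)^L L!$ times $c_L\persunset^{L+1}$ times a series in $e^{R}$. By identity~\eqref{eq:period-identity}, this equals $\pm L!\bigl(1+\sum(-1)^n a^L(n)e^{nR}\bigr)^{-1}$. The constant source $-(L+1)!$ of Proposition~\ref{prop:source} then matches the leading factor $(L+1)\cdot L!$. The excess from the $e^{nR}$ corrections is cancelled order by order by the terms $e^{lR}R^r$ with $r\le L-1$. This is solvable recursively in $l$, because $\frac{d}{dR}$ acts triangularly on $e^{lR}R^r$ with nonzero diagonal $l$. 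That gives rationality of the $d_r(l)$, since $a^L(n)\in\mathbb Z$. The remainder is then a homogeneous solution with at most $(\log p^2)^{L-1}$ growth. It must therefore be a combination $\sum_{r<L}\alpha_r\Frob_r^{(L)}$, because the Frobenius basis of Definition~\ref{def:Frobenius} spans exactly those solutions, the $\log^L$ solution being absent.

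The main obstacle is identifying the constants $\alpha^{(L)}_r$, including the generating function~\eqref{e:Gdef}. For this I would compare the coefficients of $(p^2)^{-1}(\log p^2)^r$ at $N=0$ between the Mellin expansion and the ansatz. On the Mellin side these come from $c_k(0)$, which involve Bell polynomials in $\Psi^{(j)}(1)=(-1)^{j+1}j!\zeta(j+1)$ together with powers of $\pi^2$. On the ansatz side they come from $-(L+1)(-R_0)^L$ and the $\Frob_r$ leading terms. Concretely, at $N=0$ the generating function of $\frac{d^k}{dz^k}\frac{\Gamma(1+z)}{\Gamma(-z)}\Gamma(1+z)^{-2(L+1)}\,X^{(L+1)z}$ is
\[
-\frac{\sin\pi z}{\pi}\,\Gamma(1+z)^{2-2(L+1)}\,\text{(derivative structure)} .
\]
Using $\log\Gamma(1+z)=-\gamma_E z+\sum_{k\ge2}(-1)^k\zeta(k)z^k/k$, the even zeta values combine with $\sin\pi z/\pi z$ and cancel, which explains why only odd zetas survive. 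The resulting exponential $\exp\bigl(\sum\zeta(2k+1)z^{2k+1}/(2k+1)\bigr)$ should reorganise into $e^{x^3Q(x)}$. The factor $\frac{2Q}{1+\lambda x}$ should encode the mixing of the powers of $\log$, with $\lambda$ tracking $r$. I expect that matching this precise normalisation, including the $(n+4)!$ and the vanishing $\alpha$ for $L\le2$, will be the delicate bookkeeping step. Convergence for $|e^{R}|<1$ follows from the absolute convergence statement of Theorem~\ref{thm:generic} together with the analyticity of the mirror map.
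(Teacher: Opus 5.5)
For the structural part you take a genuinely different route from the paper. The paper specialises Theorem~\ref{thm:generic} to equal masses, substitutes $1/p^2=\sum_n\delta_n e^{nR^{(L)}}$ and $\log p^2=-R^{(L)}+O(e^{R^{(L)}})$, and asserts that the zeta-dependent pieces assemble into Frobenius terms. You instead build a particular solution with rational $d^{(L)}_r(l)$ by a triangular recursion in $e^{lR}$, and identify the remainder as a homogeneous solution. This gives an actual reason, which the paper omits, why the zeta-dependent part is a combination of the $\Frob^{(L)}_r$ and why the $d^{(L)}_r(l)$ are rational. The price is that you need Proposition~\ref{prop:source} (order exactly $L$, constant source $-(L+1)!$) for every $L$, and the paper only checks it through five loops.

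Your description of the operator should be corrected. No Calabi--Yau factorisation is needed. By $\frac{d}{dp^2}=\persunset\frac{d}{dR^{(L)}}$, and because $\persunset$ is a solution, one has $\mathscr L_L\circ\persunset=c_L\persunset^{L+1}\bigl(\frac{d^L}{dR^L}+\sum_{k=1}^{L-1}b_k\frac{d^k}{dR^k}\bigr)$ with $b_k=O(e^{R})$. Also, $\mathscr L_L\bigl(\persunset (R^{(L)})^L\bigr)$ is not $c_L\persunset^{L+1}$ times a pure series in $e^R$, because $\persunset R^{(L)}$ is not a solution ($R^{(L)}\neq\mathfrak t_1$). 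It contains terms $e^{nR}R^r$ with $r\le L-1$, which your recursion absorbs anyway. The integrality in Proposition~\ref{prop:identity} plays no role.

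The genuine gap is the identification of the $\alpha^{(L)}_r$. You leave it as bookkeeping while steering toward the target, and carried out correctly it does not give \eqref{e:Gdef}. Since $\persunset=1/p^2+O(p^{-4})$ and $e^{lR}=O(p^{-2l})$, the $\alpha^{(L)}_r$ are fixed by the $N=0$ term alone.

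Your diagonal restriction $z_i=z$ with $\Gamma(1+z)^{-2(L+1)}X^{(L+1)z}$ does not compute $\partial_{z_1}\cdots\partial_{z_{L+1}}$ at $0$. What matters is that only the linear part $1+z_i\log X$ of each $X^{z_i}/\Gamma(1+z_i)^2$ contributes, after the $\gamma_E$ cancellation. So the even zetas from $\Gamma(1+z_i)^{-2}$ simply drop out. The $N=0$ term is then $(L+1)!$ times the coefficient of $x^{L+1}$ in $\frac{\Gamma(1+x)e^{2\gamma_E x}}{\Gamma(-x)}e^{x\log X}=-x\,e^{-2x^3Q(x)}e^{x\log X}$.

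Fix the overall sign by positivity of the parametric integral at $t=-p^2>0$. Alternatively, use $\mathcal M(-1+\epsilon)=\frac{\Gamma(1-\epsilon)}{\Gamma(\epsilon)}\int_0^\infty s^{\epsilon-1}\bigl(2K_0(2\sqrt s)\bigr)^{L+1}ds$. Either way, with $[x^k]$ denoting the coefficient of $x^k$,
\[
t\,\intsunset^{(L)}(p^2,\mathbf 1)=(L+1)!\,[x^L]\exp\bigl(x\log t+2x^3Q(x)\bigr)+O\bigl(t^{-1}\log^L t\bigr).
\]
So the exponent is $2x^3Q$, not $x^3Q$. In the paper's sign conventions this gives $\alpha^{(L)}_r=(-1)^r(L+1)!\,[x^{L-r}]\,e^{2x^3Q(x)}$, that is,
\[
G(x,\lambda)=\frac{e^{2x^3Q(x)}-1}{x^3(1+\lambda x)}=\frac{2Q(x)}{1+\lambda x}\,e^{x^3Q(x)}\,\frac{\sinh\bigl(x^3Q(x)\bigr)}{x^3Q(x)}.
\]
This reproduces every coefficient listed in \eqref{eq:higher-loop}. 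It differs from \eqref{e:Gdef} starting at $L=9$: the $\zeta(3)^3$ part of $\alpha^{(9)}_0$ is $\frac{4}{81}\,10!$, not $\frac{1}{27}\,10!$.

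Set $W=e^{-2\gamma x}\Gamma(1-x)/\Gamma(1+x)$. The paper's proof reaches \eqref{e:Gdef} through the factor $e^{-\gamma x}\sqrt{\Gamma(1-x)/\Gamma(1+x)}\,\bigl(\log\Gamma(1-x)-\log\Gamma(1+x)-2\gamma x\bigr)=\sqrt W\log W$. The residue actually produces $W-1=2\sqrt W\sinh(\tfrac12\log W)$. So your final matching cannot be completed as you expect. Done correctly, it proves the corrected generating function above and shows that \eqref{e:Gdef} holds only through eight loops.
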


\begin{proof}
In the all-equal-mass case $m_1=\cdots=m_{L+1}=1$, the general expansion of
Theorem~\ref{thm:generic} reduces to
\begin{equation}
  \intsunset^{(L)}(p^2,\mathbf{1})
  \;=\;
 \persunset^{(L)}(p^2,\mathbf{1})
  \left(
    -(L+1)\bigl(-\log p^2\bigr)^L
    + \sum_{a=0}^{L-1} \left(\sum_{n=1}^\infty c^{(L)}_a(n)/(p^2)^n\right)\,\bigl(-\log p^2\bigr)^a
  \right),
  \label{eq:proof-step1}
\end{equation}
where the
coefficients $c^{(L)}_a(n)$ are combinations of rational numbers and odd
zeta values as per the expansion of the ratio of the Gamma-functions
in the generic mass representation in equation~\eqref{e:IsunsetResult}. Changing variables from $p^2$ to the
flat coordinate via $1/p^2=\sum_{n\geq 1}\delta_n e^{nR^{(L)}(p^2)}$
and expressing $\log p^2=-R^{(L)}+O(e^{R^{(L)}})$ yields the
representation in equation~\eqref{eq:main-equalmass}, where we have
separated the dependence on the odd zeta values into the coefficients
of the Frobenius basis elements.

\medskip

The coefficients $\alpha_r^{(L)}$ are extracted by identifying the
$\log(p^2)$ terms.  Setting $\lambda=\log(p^2)$ in the single mass specialization of equation~\eqref{e:IsunsetResult} one
gets  that the coefficients of logarithm terms from the Frobenius
element at $L$ loop order is given by 
\begin{equation}
\alpha^{(L)}_r=(L+1)! \,
\operatorname{coeff}_{\lambda^{r}}\operatorname{coeff}_{x^{L-3}}\,\left(
  e^{-\gamma  x} \sqrt{\Gamma (1-x)\over \Gamma (x+1)} \frac{-2 \gamma  x+\log (\Gamma
    (1-x))-\log (\Gamma (1+x))}{x^3 (1+\lambda x)}\right) 
\end{equation}
where $\operatorname{coeff}_{x^{L-3}}$ is the coefficient of order
$x^{L-3}$ from the series expansion near $x=0$ and
$\operatorname{coeff}_{\lambda^{r}}$ is coefficient of $\lambda^r$.
Using that
\begin{equation}
   -2 \gamma  x+\log (\Gamma
    (1-x))-\log (\Gamma (1+x))=2\sum_{k=1}^\infty
    {\zeta(2k+1)\over2k+1} x^{2k+1}\,,
  \end{equation}
  we  obtain the result in equation~\eqref{e:Gdef}.

\end{proof}

\begin{remark}[Gamma class and the leading logarithm]
\label{rmk:gammaclass}
The leading term $-(L+1)(R^{(L)})^L$ in equation~\eqref{eq:main-equalmass} was
observed numerically in~\cite{Klemm:2019dbm, Bonisch:2020qmm} and shown to coincide
with the modified Gamma class of the degree-$(1,\ldots,1)$ Fano hypersurface
in $\mathbb{P}^{L+1}$. This has been established rigorously in~\cite{Iritani:2022}
(see also~\cite{Kerr:2022}).
\end{remark}

\medskip

\begin{remark}[Mirror map and instanton corrections]
\label{rmk:mirror}
The function $R^{(L)}(p^2)$ is a mirror map from the complex-structure
parameter $p^2$ to the K\"ahler parameter, generalising the one used
in~\cite{Bloch:2016izu}. Near $p^2=\infty$ one has
$R^{(L)}(p^2)\sim-\log(p^2)+O(1/p^2)$, so that $e^{R^{(L)}}\sim 1/p^2$
at large momentum. The exponential terms $e^{lR^{(L)}}$ in equation~\eqref{eq:main-equalmass}
are characteristic of instanton corrections in mirror symmetry, as illustrated in detail at two loops below.

Note that $R^{(L)}$ differs from the K\"ahler parameter $t_1$ defined from
the Frobenius basis at the point of maximal unipotent monodromy
\begin{equation}
 \mathfrak t_1
  \;=\;
  -\log(p^2)
  + \frac{1}{p^2}
  \sum_{(r_1,\ldots,r_{L+1})\in\mathbb{N}^{L+1}}
  \left(\frac{(r_1+\cdots+r_{L+1})!}{r_1!\cdots r_{L+1}!}\right)^{\!2}
  \left((L+1)\sum_{k=1}^{n}\frac{1}{k}-\sum_{i=1}^{L+1}\sum_{k=1}^{r_i}\frac{1}{k}\right)
  \prod_{i=1}^{L+1}\frac{1}{(p^2)^{r_i}},
  \label{eq:t1}
\end{equation}
by exponentially small corrections in $e^{R^{(L)}}$.
\end{remark}

\subsection{The two-loop sunset}
\label{subsec:2loop}

At two loops ($L=2$), with equal masses $m_1=m_2=m_3=1$, the general
formula in equation~\eqref{eq:main-equalmass} specialises to
\begin{equation}
  \intsunset^{(2)}(p^2,\mathbf{1})
  \;=\;
  \persunset^{(2)}(p^2)
  \left(
    -3\bigl(R^{(2)}(p^2)\bigr)^2
    + \sum_{l=1}^{\infty}
    \Bigl(d^{(2)}_0(l) + d^{(2)}_1(l)\,R^{(2)}(p^2)\Bigr)\,
    e^{lR^{(2)}(p^2)}
  \right).
  \label{eq:2loop}
\end{equation}
No Frobenius correction appears at $L=2$ since $L-r=2-0=2$ is even and odd
zeta values of even weight do not occur.

Using the code~\cite{repoAllLoopSunset}, we recover the results of~\cite{Bloch:2016izu}
\begin{equation}
  d^{(2)}_0(l) \;=\; -6l\,\mathrm{GW}(2,l),
  \qquad
  d^{(2)}_1(l) \;=\; 6l^2\,\mathrm{GW}(2,l), \qquad  \mathrm{GW}(2,l) \;=\; \sum_{d|l}\frac{1}{d^3}\,n_{l/d}
  \label{eq:2loop-coeff}
\end{equation}
where $n_l$ are the virtual genus-zero Gromov--Witten invariants of the
mirror Calabi--Yau threefold associated to the two-loop sunset elliptic
curve~\cite{Bloch:2016izu}. The first values for $1\leq l\leq 19$ are
\begin{align}
  \mathrm{GW}(2,l):&\quad
  1,\,-\tfrac{7}{8},\,\tfrac{28}{27},\,-\tfrac{135}{64},\,
  \tfrac{626}{125},\,-\tfrac{751}{54},\,\tfrac{14407}{343},\,
  -\tfrac{69767}{512},\,\tfrac{339013}{729},\,
  -\tfrac{827191}{500},\notag\\
  &\qquad
  \tfrac{8096474}{1331},\,-\tfrac{367837}{16},\,
  \tfrac{195328680}{2197},\,-\tfrac{137447647}{392},\,
  \tfrac{4746482528}{3375},\,-\tfrac{23447146631}{4096},\notag\\
  &\qquad
  \tfrac{115962310342}{4913},\,-\tfrac{574107546859}{5832},\,
  \tfrac{2844914597656}{6859},
  \label{eq:GWvalues}
\end{align}
and the virtual numbers $n_l$ for $1\leq l\leq 19$ are
\begin{align}
  n_l:&\quad
  1,\,-1,\,1,\,-2,\,5,\,-14,\,42,\,-136,\,465,\,-1655,\,6083,\,
  -22988,\notag\\
  &\qquad
  88907,\,-350637,\,1406365,\,-5724384,\,
  23603157,\,-98440995,\,414771045.
  \label{eq:virtualn}
\end{align}

\begin{remark}[Consistency check]
\label{rmk:check}
One verifies that equation~\eqref{eq:2loop} satisfies the all-equal-mass differential
equation
\begin{equation}
  \left[p^2(p^2-1)(p^2-9)\,\frac{d^2}{d(p^2)^2}
  + \bigl(3(p^2)^2-20p^2+9\bigr)\,\frac{d}{dp^2}
  + (p^2-3)\right]
  \intsunset^{(2)}(p^2)
  \;=\;
  -3!\,,
  \label{eq:2loop-DE}
\end{equation}
confirming Proposition~\ref{prop:source}. One also verifies the identity
\begin{equation}
 p^2(p^2-1)(p^2-9) \bigl(\persunset^{(2)}(p^2)\bigr)^3
  \left(1 -\sum_{n\geq 1} n^3\,\mathrm{GW}(2,n)\,
       e^{nR^{(2)}(p^2)}\right)
  \;=\;1,
  \label{eq:GW-identity}
\end{equation}
proven in~\cite{Bloch:2016izu}, which is the special case $L=2$
of Proposition~\ref{prop:identity}.
\end{remark}

\subsection{Higher-loop results and the Frobenius corrections}
\label{subsec:higher-loop}

At loop orders $L\geq 3$, the boundary coefficients $\alpha^{(L)}_r$ are
non-zero whenever $L-r$ is odd and positive. Writing the full integral as
the sum of an instanton series $\hat{I}^{(L)}_\circleddash$ (the
$e^{l R^{(L)}(p^2)}$
terms in equation~\eqref{eq:main-equalmass}) and the Frobenius correction, the
results for $L=3,\ldots,8$ are
\begin{align}
  \intsunset^{(3)}(p^2,\mathbf{1})
  &= \hat{I}^{(3)}_{\circleddash}(p^2,\mathbf{1})
     +16\,\zeta(3)\,\Frob^{(3)}_0(p^2),
  \notag\\
  \intsunset^{(4)}(p^2,\mathbf{1})
  &= \hat{I}^{(4)}_{\circleddash}(p^2,\mathbf{1})
     - 80\,\zeta(3)\,\Frob^{(4)}_1(p^2),
  \notag\\
  \intsunset^{(5)}(p^2,\mathbf{1})
  &= \hat{I}^{(5)}_{\circleddash}(p^2,\mathbf{1})
     + 288\,\zeta(5)\,\Frob^{(5)}_0(p^2)
     + 480\,\zeta(3)\,\Frob^{(5)}_2(p^2),
  \notag\\
  \intsunset^{(6)}(p^2,\mathbf{1})
  &= \hat{I}^{(6)}_{\circleddash}(p^2,\mathbf{1})
     + 1120\,\zeta(3)^2\,\Frob^{(6)}_0(p^2)
     - 2016\,\zeta(5)\,\Frob^{(6)}_1(p^2)
     - 3360\,\zeta(3)\,\Frob^{(6)}_3(p^2),
  \notag\\
  \intsunset^{(7)}(p^2,\mathbf{1})
  &= \hat{I}^{(7)}_{\circleddash}(p^2,\mathbf{1})
     + 11520\,\zeta(7)\,\Frob^{(7)}_0(p^2)
     - 8960\,\zeta(3)^2\,\Frob^{(7)}_1(p^2)
  \notag\\
  &\quad
     + 16128\,\zeta(5)\,\Frob^{(7)}_2(p^2)
     + 26880\,\zeta(3)\,\Frob^{(7)}_4(p^2),
  \notag\\
 \intsunset^{(8)}(p^2,\mathbf{1})
  &= \hat{I}^{(8)}_{\circleddash}(p^2,\mathbf{1})
     + 96768\,\zeta(5)\zeta(3)\,\Frob^{(8)}_0(p^2)
     - 103680\,\zeta(7)\,\Frob^{(8)}_1(p^2)
  \notag\\
  &\quad
     + 80640\,\zeta(3)^2\,\Frob^{(8)}_2(p^2)
     - 145152\,\zeta(5)\,\Frob^{(8)}_3(p^2)
     - 241920\,\zeta(3)\,\Frob^{(8)}_5(p^2),
  \label{eq:higher-loop}
\end{align}
where $\Frob^{(L)}_r(p^2)$ is defined
in equation~\eqref{eq:Frob-def} and has the behavior $\log(p^2)^r$ for large-$p^2$.
It can be checked that the coefficients in front of 
$\Frob_r^{(L)}(p^2)$ is given by the generating series in equation~\eqref{e:Gdef}.
The
coefficients for $L=3$ and $L=4$ are listed in full in
Appendices~\ref{sec:3loop1mass} and~\ref{sec:4loop1mass} respectively; all cases up
to $L=10$ are available in the repository~\cite{repoAllLoopSunset}.

\medskip

\begin{remark}
  The pattern of zeta values in equation~\eqref{eq:higher-loop} is consistent with
  the Gamma-class prediction of~\cite{Bonisch:2020qmm,Bonisch:2021yfw,Iritani:2022,Kerr:2022}: the
  boundary coefficients $\alpha^{(L)}_r$ are governed by the modified Gamma
  class $\hat\Gamma$ of the mirror Fano variety, whose expansion in terms of
  Riemann zeta values at odd arguments is well-known in mirror symmetry.
  Establishing this connection rigorously for $L\geq 3$ would constitute a
  natural extension of the proof in~\cite{Iritani:2022} and is left for
  future work.
\end{remark}

\begin{table}[h]
\centering
\begin{tabular}{r|r|r|r|r|r|r}
  \toprule
  \rowcolor{headerblue}
\multicolumn{2}{c|}{$L=2$} & \multicolumn{2}{c|}{$L=3$} & \multicolumn{2}{c}{$L=4$} \\
\cmidrule(lr){1-2} \cmidrule(lr){3-4} \cmidrule(lr){5-6}
$n$ & $a^2_n$ & $n$ & $a^3_n$ & $n$ & $a^4_n$ \\
\midrule
  1 & 1 & 1 & 4 & 1 & 10 \\
  \rowcolor{rowgray}
2 & 7 & 2 & 48 & 2 & 191 \\
  3 & 28 & 3 & 488 & 3 & 3,460 \\
  \rowcolor{rowgray}
4 & 135 & 4 & 5,296 & 4 & 65,161 \\
  5 & 626 & 5 & 57,228 & 5 & 1,234,080 \\
  \rowcolor{rowgray}
6 & 3,004 & 6 & 624,160 & 6 & 23,483,441 \\
  7 & 14,407 & 7 & 6,834,056 & 7 & 448,067,308 \\
  \rowcolor{rowgray}
8 & 69,767 & 8 & 75,080,688 & 8 & 8,565,472,861 \\
  9 & 339,013 & 9 & 826,829,636 & 9 & 163,965,975,856 \\
  \rowcolor{rowgray}
10 & 1,654,382 & 10 & 9,122,595,728 & 10 & 3,141,958,516,156 \\
  11 & 8,096,474 & 11 & 100,800,501,072 & 11 & 60,254,229,119,264 \\
  \rowcolor{rowgray}
12 & 39,726,396 & 12 & 1,115,131,619,744 & 12 & 1,156,222,352,587,939 \\
13 & 195,328,680 & 13 & 12,348,554,397,168 & 13 &
                                                  22,197,713,664,648,102 \\
  \rowcolor{rowgray}
14 & 962,133,529 & 14 & 136,855,381,207,200 & 14 & 426,332,197,232,125,896 \\
15 & 4,746,482,528 & 15 & 1,517,777,500,185,000 & 15 &
                                                       8,190,878,975,022,343,620 \\
  \rowcolor{rowgray}
16 & 23,447,146,631 & 16 & 16,842,669,239,950,832 & 16 & 157,409,777,121,794,035,317 \\
17 & 115,962,310,342 & 17 & 186,997,131,651,497,184 & 17 &
                                                           3,025,750,140,688,938,371,352 \\
  \rowcolor{rowgray}
18 & 574,107,546,859 & 18 & 2,077,070,943,532,153,648 & 18 & 58,172,760,670,776,391,232,102 \\
19 & 2,844,914,597,656 & 19 & 23,080,024,226,741,205,328 & 19 & 1,118,612,006,616,027,251,547,612 \\
\bottomrule
\end{tabular}
\caption{Sequence values for $L=2$, $L=3$, and $L=4$
  in equation~\eqref{eq:period-identity}. We notice that the
  coefficients have an exponential growth $(-c_L
(2L+2)!/L!^2)^n$ times small polynomial corrections and  with the
coefficients of order $c_L\simeq 1$.}
\label{tab:all_sequences}
\end{table}

\section{The sunset integral in four dimensions}\label{sec:finite4d}

Tarasov’s dimension-shifting relations~\cite{Tarasov:1996br,
  Tarasov:1997kx} connect Feynman integrals in different space–time
dimensions. By combining these relations with an expansion near  $D=4-2
\epsilon$, we obtain a systematic method for computing the pole and finite parts in four dimensions from results obtained in two dimensions.

\medskip

The key insight is that certain differential operators in the kinematic variables can shift the spacetime dimension. Specifically, for the sunset integral, applying an appropriate combination of derivatives with respect to $p^2$ to the $D$-dimensional integral yields the $(D+2)$-dimensional integral with modified propagator powers. While these relations are traditionally presented as descent relations (from $D+2$ to $D$), they can be reorganized into ascent relations~\cite{Lee:2009dh} relating the Feynman integral in dimension $D+2$ to the integral in dimension $D$ with shifted powers of the propagators. Consequently, any Feynman integral in $D+2$ dimensions can be expanded on a basis of master integrals in $D$ dimensions.

\medskip

Consider the all-equal-mass sunset integral normalized
by
\begin{equation}\label{e:normalised-int}
\nintsunset^{D,(L)}(p^2,\undernotation 1) =-
\frac{\intsunset^{D,(L)}(p^2,\undernotation 1)}{(L+1)!
  \left(\Gamma\left(4-D\over 2\right)\right)^L}\,.
\end{equation}
The normalized integral satisfies a minimal differential equation of order $L$~\cite{delaCruz:2024xit}:
\begin{equation}\label{e:PF1mass}
    \underbrace{\left( \sum_{r=0}^{L}q_r(p^2,D) \left(\frac{d}{dp^2}\right)^r
 \right)}_{=:\mathscr L_\circleddash^{D,(L)}}   \nintsunset^{D,(L)}(p^2,\undernotation
    1)=1\,,
\end{equation}
where the coefficients $q_r(p^2,D)$ are polynomials in $p^2$ of degree 
$2(L+1)-LD+r$ and polynomials in the dimension $D$. 

\medskip

\noindent\textbf{Basis of master integrals.} The general solution to equation~\eqref{e:PF1mass} is determined by $L$ linearly independent functions. A canonical choice of basis is given by successive derivatives:
\begin{equation}\label{e:Basis}
\mathcal{B}^{D,(L)} = \left\{\nintsunset^{D,(L)}, \frac{d}{dp^2}\nintsunset^{D,(L)}, \ldots, \left(\frac{d}{dp^2}\right)^{L-1}\nintsunset^{D,(L)}\right\}
\end{equation}
This basis consists of $L$ elements (derivatives up to order $L-1$) because the differential operator has order $L$. Higher derivatives can be expressed in terms of lower ones using equation~\eqref{e:PF1mass}.

\medskip
\noindent\textbf{Dimension-lowering operator.} The structure of this basis, combined with Tarasov's dimension-lowering relations, implies that the all-equal-mass sunset integral in $D+2$ dimensions can be obtained from the integral in $D$ dimensions by the action of a differential operator of order $L-1$:
\begin{equation}\label{eq:dimraise}
\nintsunset^{D+2,(L)} (p^2, \mathbf{1}) =
\underbrace{\left(\sum_{r=0}^{L-1} o^{(L)}_r(p^2, D)
    \left(\frac{d}{dp^2}\right)^r\right)}_{=:
  \mathscr{O}^{(L)}}\nintsunset^{D,(L)} (p^2, \mathbf{1})+ S^{(L)}(p^2,D)
\end{equation}
where $o^{(L)}_r(p^2, D)$ is a polynomial in $p^2$ of degree $L+r$ and
$S^{(L)}(p^2,D)$ is a polynomial in $p^2$ of order 
$L-1$ generated from the reduction of the reduced topologies that are
multiple tadpoles for the case of the sunset integrals.

\medskip

The order of the operator $\mathscr{O}^{(L)}$  is $L-1$ because the right-hand side must be expressible in the basis $\mathcal{B}^{D,(L)}$, which spans an $L$-dimensional space but consists of derivatives only up to order $L-1$.

\medskip
\noindent\textbf{System of differential equations.}
 Applying the differential operator $\mathscr{L}_\circleddash^{D+2,(L)}$ to equation~\eqref{eq:dimraise} and using equation~\eqref{e:PF1mass}, we obtain
\begin{equation}
\left(\mathscr{L}_\circleddash^{D+2,(L)} \circ
  \mathscr{O}^{(L)}\right) \nintsunset^{D,(L)}(p^2, \mathbf{1})
+\mathscr{L}_\circleddash^{D+2,(L)} S^{(L)}(p^2,D)
= 1\,.
\end{equation}
This  operator $\mathscr{L}_\circleddash^{D+2,(L)} \circ
  \mathscr{O}^{(L)}$  is reduced modulo $
  \mathscr{L}_\circleddash^{D,(L)}$ according
  \begin{equation}
        \mathscr{L}_\circleddash^{D+2,(L)} \circ
  \mathscr{O}^{(L)}=Q \circ\mathscr{L}_\circleddash^{D,(L)}+R
  \end{equation}
  and we use that  $
\nintsunset^{D,(L)}(p^2, \mathbf{1})$ satisfies the inhomogeneous
equation~\eqref{e:PF1mass} to get
\begin{equation}
Q(1)+R \nintsunset^{D,(L)}(p^2, \mathbf{1})
+\mathscr{L}_\circleddash^{D+2,(L)} S^{(L)}(p^2,D)
= 1\,.
\end{equation}
Since $R$ is a differential operator of order $L-1$ and the basis
$\mathcal{B}^{D,(L)} $ of master
integrals in equation~\eqref{e:Basis} is composed of independent integrals, we
must have that $R=0$, and $Q(1)
+\mathscr{L}_\circleddash^{D+2,(L)} S^{(L)}(p^2,d)
= 1$.
This gives a system of differential equations for the
coefficients $o^{(L)}_r(p^2, D)$ of $\mathscr{O}^{(L)}$ and $S^{(L)}(p^2,D)$.

\medskip

We determine $o^{(L)}_r(p^2, D)$ and
$S^{(L)}(p^2,D)$ by using that they are polynomials of known degree in $p^2$. 
This polynomial constraint is essential: it converts the differential
equations (which otherwise admit arbitrary integration constants) into
algebraic equations with unique solutions. The universality of
dimension-shifting relations requires that they be independent of
boundary conditions, which is achieved by this algebraic
determination.

\subsection{Dimension-lowering relations at one, two and three loops}

Using a dedicated \texttt{Maple} code~\cite{repoAllLoopSunset} and the known differential equations up to 20
loops determined in~\cite{delaCruz:2024xit} one can determine the
dimension-lowering operators  and then extract finite piece of the
multiloop sunset in four dimensions from derivatives of the two
dimensions result.
\medskip

We display the solutions at one-, two- and three-loops order.  Higher-order
results are easily obtained from the above mentioned code.

We obtain for the multiloop sunset integral
\begin{equation}\label{e:DimRaisingExplicit}
  \intsunset^{D+2\, (L)}(p^2,\undernotation 1)=
  \left(2\over 2-D\right)^L \,  \mathscr O^{(L)}
  \intsunset^{D\, (L)}(p^2,\undernotation 1)-(L+1)! \left(\Gamma\left(2-D\over 2\right)\right)^L\, S^{(L)}(p^2,D)\,.
\end{equation}

\noindent{\bf At one-loop order:}
The dimension-lowering operator is given by
\begin{equation}
  \mathscr O^{(1)}=-{D-2\over D-1}\,{p^2-4\over 4},  
\end{equation}
and the inhomogeneous piece by
\begin{equation}
  S^{(1)}(p^2,D)= -{1\over 2(D-1)}\,.  
\end{equation}

\noindent{\bf At two-loop order:}
The dimension-lowering operator is given by
\begin{multline}
  \mathscr O^{(2)}={(D-2)^2\over12 \left(D-1\right) \left(3 D-4\right)
    \left(3 D-2\right)}\Big[2 \left(p^2 -1\right) \left(p^2 +3\right)
  \left(p^2 -9\right) {d\over dp^2}\cr
 +(4 -D) (p^2)^{2}+(24-22 D) p^2 +87 D -156\Big]
  \end{multline}
  and the inhomogeneous part by
  \begin{equation}
    S^{(2)}(p^2,D)=\frac{\left(D-2\right) p^2 +21 D -30}{6 \left(D -1\right) \left(3 D -4\right) \left(3 D -2\right)}   \,.
  \end{equation}

  \noindent{\bf At three-loop order:}
  The dimension-lowering operator  is given by
  \begin{multline}
         \mathscr O^{(3)}={(D-2)^3\over 192 \left(D-1\right) \left(3
             D-4\right) \left(2 D-1\right) \left(2 D-3\right) \left(3
             D-2\right)}\cr
         \times\Big[-2 \left(p^2 -4\right) \left(p^2 -16\right)
         \left((p^2)^{2}+40 p^2 +64\right) p^2 \,\left(d\over dp^2\right)^{2}\cr
         +\bigg(\left(3 D-12\right) (p^2)^{4}+\left(200 D-500\right)
         (p^2)^{3}+\left(-1344 D+5376\right) (p^2)^{2}\cr
         -\left(D-3\right) \left(D-4\right) (p^2)^{3}+\left(-90
           D^{2}+542 D-736\right) (p^2)^{2}\cr
         +\left(-1392 D^{2}+3680 D-1648\right) p^2 \cr+4480 D^{2}-17792 D+18176
\Big]
  \end{multline}
  and the inhomogeneous part is given by
  \begin{equation}
    S^{(3)}(p^2,D)=\frac{-\left(D-2\right)^{2} (p^2)^{2}-8 \left(11 D-19\right) \left(D-2\right) p^2 -1216 D^{2}+3712 D-2848}{144 \left(D-1\right) \left(3 D-4\right) \left(2 D-1\right) \left(2 D-3\right) \left(3 D-2\right)}    \,.
  \end{equation}

  \medskip
  \begin{remark}\label{rem:epsilon}
As explained in~\cite{delaCruz:2024xit}, due to the absence of
apparent singularities, the dimension dependent all-equal-mass
differential operator for the multiloop sunset integral is given by
\begin{equation}
  \mathscr L_\circleddash^{4-2\epsilon,(L)}  = \mathscr
  L^{(L)}+\sum_{r=1}^{L} (D-2)^r \mathscr
  L^{(L-r)}
\end{equation}
where the differential operators $\mathscr
  L^{(r)} $ are of order $r$. This implies that the coefficient
  differential operator $\mathscr O^{(L)}$ are proportional to
  $(D-2)^L$, as
  \begin{equation}
  {\mathscr O}^{(L)}=  \sum_{r=0}^{L-1} o^{(L)}_r(p^2, D)
    \left(\frac{d}{dp^2}\right)^r= (D-2)^L  \sum_{r=0}^{L-1} \bar o^{(L)}_r(p^2, D)
    \left(\frac{d}{dp^2}\right)^r\,,
  \end{equation}
  where the coefficients $\bar o^{(L)}_r(p^2, D)$ have a finite
  non-vanishing limit for $D=2$.
 \end{remark}

  \subsection{Expansion near four dimensions}

The dimension-lowering relations given in the previous section take the
following form when applied to the case of $D=2-2\epsilon$,
\begin{equation}
    \intsunset^{D=4-2\epsilon\, (L)}(p^2,\undernotation 1)=  {1\over
    \epsilon^L} \, {\mathscr
    O}^{(L)}  \intsunset^{D=2-2\epsilon\, (L)}(p^2,\undernotation 1)
 -(L+1)!  \left(\Gamma\left(\epsilon\right)\right)^L \, S^{(L)}(p^2,2-2\epsilon)
\end{equation}
By plugging the expansion of the sunset near two dimension
\begin{equation}
  \intsunset^{D=2-2\epsilon\, (L)}(p^2,\undernotation 1)=e^{-L\gamma_E\epsilon}
  \jintsunset^{D=2\,(L)}(p^2,\undernotation 1)+O(\epsilon)\,,
\end{equation}
one can recover the expansion of the multiloop sunset near four dimensions
\begin{equation}
  \intsunset^{D=4-2\epsilon\, (L)}(p^2,\undernotation 1)=e^{-L\gamma_E\epsilon}\left(
  \sum_{r=1}^{L} {a_r^{(L)}(p^2)\over    \epsilon^r} +
  \jintsunset^{D=4\,(L)}(p^2,\undernotation 1)+O(\epsilon)\right)\,,
\end{equation}
where  $\jintsunset^{D=4\,(L)}(p^2,\undernotation 1)$ is the finite
part of the sunset integral in four dimensions.

\medskip
Thanks to  the factorisation of $\epsilon^L$ from the differential
operator $\mathscr O^{(L)}$, as explained in remark~\ref{rem:epsilon},
and 
$\epsilon\Gamma(\epsilon)=\Gamma(1+\epsilon)$ the differential part is
finite for $\epsilon\to0$ but the inhomogeneous term has poles in
$\epsilon$.

\medskip

In particular, one can obtain the finite part in four dimensions from
derivative of the finite part in two dimensions, with the relation
that takes the form 
\begin{equation}\label{e:Result4d}
   \jintsunset^{D=4-2\epsilon\, (L)}(p^2,\undernotation
   1)=\left(\sum_{r=0}^{L-1} \bar c^{(L)}_r(p^2) \left(d\over dp^2\right)^r\right) \jintsunset^{D=2\, (L)}(p^2,\undernotation 1)+R_4^{(L)}(p^2)\,,
\end{equation}
where the coefficients $\bar
c^{(L)}_r(p^2)=\lim_{\epsilon\to0}
\epsilon^{-L} c_r^{(L)}(p,2-2\epsilon)$
where $c_r^{(L)}(p,2-2\epsilon)$ are the coefficients of ${\mathscr
  O}^{(L)}$ in equation~\eqref{e:DimRaisingExplicit} and the 
inhomogeneous term reads
\begin{equation}
-(L+1)!\left(\Gamma(\epsilon)\right)^L
e^{L\epsilon\gamma} S^{(L)}(p^2,2-2\epsilon)=\sum_{r=1}^L
{a_r^{(L)}(p^2)\over\epsilon^r} + R_4^{(L)}(p^2)+\mathcal O(\epsilon)\,.
\end{equation}

\medskip
We list the results at one-, two- and three-loop order,more examples are available in the repository~\cite{repoAllLoopSunset}.

\smallskip

\noindent{\bf At one-loop order:} For $L=1$, we have the relation
between the four dimensional finite piece and the two-dimensional
value

\begin{equation}
  \intsunset^{D=4\,(1)}(p^2,\undernotation 1)= {1\over \epsilon}+ {p^2-4\over4}   \intsunset^{D=2,(1)}(p^2,\undernotation 1)+{1\over2}\,.
\end{equation}

\noindent{\bf At two-loop order:}
 For $L=2$, we have the explicit relation between the finite piece in four dimensions and the value of the integral in two dimensions~\cite{Laporta:2004rb,Bloch:2013tra}
 \begin{multline}\label{eq:2loopshift}
        \intsunset^{D=4\,(2)}(p^2,\undernotation 1^2)=-{3\over
          2\epsilon^2} + {p^2-18 \over 4\epsilon}\cr
+  {(p^2-1)(p^2-9)\over 12}\left(1+(p^2+3){d\over dt}\right)
  \intsunset^{D=2,(2)}(p^2,\undernotation 1)+ {13p^2-72\over128}-{\pi^2\over4}
\end{multline}

\noindent{\bf At three-loop order:}
For $L=3$, we recover the relation derived in~\cite{Lellouch:2025rnz}
(taking into account that $E_1(2;p^2)= - \intsunset^{D=2,(3)}(p^2,\undernotation 1)$)

\begin{multline}\label{eq:3loopshift}
   \intsunset^{D=4\,(3)}(p^2,\undernotation 1^2)
   ={2\over\epsilon^3} + {23 - p^2\over3\epsilon^2} + {630 + 18\pi^2 + p^2(p^2-54)\over36\epsilon}\cr
+\frac{p^2 \left(p^2 -4\right) \left(p^2 -16\right) \left((p^2)^{2}+40 p^2 +64\right)}{288}
   \left(d\over dp^2\right)^2  \intsunset^{D=2,(3)}(p^2,\undernotation 1)\cr
   +{3 (p^2)^{4}+50 (p^2)^{3}-1344 (p^2)^{2}+1920 p^2 +4096\over 288} {d\over dp^2} \intsunset^{D=2,(3)}(p^2,\undernotation 1)\cr
   +{(p^2)^{3}+6 (p^2)^{2}-72 p^2 -256\over 288} \intsunset^{D=2,(3)}(p^2,\undernotation 1)\cr
        + \frac{71 (p^2)^2+18p^2+6102}{216} 
        + \frac{(23-p^2)\pi^2}{12} - 2\zeta(3)\,.
      \end{multline}

Higher orders in the $\epsilon$ expansion require knowledge of the corresponding higher-order $\epsilon$ terms in $D = 2 - 2\epsilon$. The $\epsilon$-expansion in two dimensions, derived in~\cite{Duhr:2025ouy}, is sufficient to determine the expansion in four dimensions through the application of the dimension-lowering formula.
      
\appendix
\section{The three-loop case}\label{sec:3loop1mass}
The three-loop all-equal-mass integral is
\begin{multline}\label{e:IQ3loop}
  \intsunset^{(3)}(p^2,\undernotation 1)= \persunset^{(3)}(p^2,\undernotation 1)  \left(4 (R^{(3)}(p^2))^3
    +\sum_{l=1}^\infty e^{lR^{(3)}(p^2)} \sum_{r=0}^{2}
    d^{(3)}_r (l)(R^{(3)}(p^2))^r\right)\cr
  +16\zeta(3)\Frob^{0\,(3)}(p^2)\,.
\end{multline}
The coefficients $d_r^{(3)}(l)$ with $0\leq
r\leq 2$ are given in the table~\ref{tab:3loop-coeffs}. These
coefficients are obtained using the code in~\cite{repoAllLoopSunset}.

\captionsetup{width=15cm}
\begin{landscape}
 \begin{center}
{\large\bfseries Coefficients $d_0^{(3)}(\ell)$, $d_1^{(3)}(\ell)$ and $d_2^{(3)}(\ell)$ for the $L=3$ Multiloop Sunset}
\end{center}
\vspace{4pt}

\begin{longtable}{%
  C{0.5cm}  
  R{5.8cm}  
  R{5.8cm}  
  R{5.8cm}  
  }
\toprule
\rowcolor{headerblue}
$\ell$ & $(-1)^{\ell+1}d_0^{(3)}(\ell)$ & $(-1)^\ell d_1^{(3)}(\ell)$ &
                                                                $(-1)^\ell d_2^{(3)}(\ell)$ \\
\midrule
\endfirsthead

\toprule
\rowcolor{headerblue}
$\ell$ & $(-1)^{\ell+1}d_0^{(3)}(\ell)$ & $(-1)^\ell d_1^{(3)}(\ell)$ &
                                                                $(-1)^\ell d_2^{(3)}(\ell)$ \\
\midrule
\endhead

\bottomrule
\endfoot

\rowcolor{white}
$1$ &
$48$ &
$0$ &
$ 24$ \\

\rowcolor{rowgray}
$2$ &
$ 138$ &
$48$ &
$132$ \\

\rowcolor{white}
$3$ &
$\dfrac{7144}{9}$ &
$ 528$ &
$ 800$ \\[8pt]

\rowcolor{rowgray}
$4$ &
$ \dfrac{63695}{12}$ &
$4652$ &
$6018$ \\[8pt]

\rowcolor{white}
$5$ &
$\dfrac{15112894}{375}$ &
$ 41672$ &
$ \dfrac{242424}{5}$ \\[8pt]

\rowcolor{rowgray}
$6$ &
$ \dfrac{73656439}{225}$ &
$\dfrac{5695028}{15}$ &
$415088$ \\[8pt]

\rowcolor{white}
$7$ &
$\dfrac{23997011104}{8575}$ &
$ \dfrac{17647088}{5}$ &
$ \dfrac{25875552}{7}$ \\[8pt]

\rowcolor{rowgray}
$8$ &
$ \dfrac{116790128543}{4704}$ &
$\dfrac{233804429}{7}$ &
$33914049$ \\[8pt]

\rowcolor{white}
$9$ &
$\dfrac{134933770848983}{595350}$ &
$ \dfrac{33700241452}{105}$ &
$ \dfrac{954894248}{3}$ \\[8pt]

\rowcolor{rowgray}
$10$ &
$ \dfrac{1399668884020621}{661500}$ &
$\dfrac{1640310165758}{525}$ &
$\dfrac{15208750932}{5}$ \\[8pt]

\rowcolor{white}
$11$ &
$\dfrac{885466604960655358}{44022825}$ &
$ \dfrac{645818708336}{21}$ &
$ \dfrac{324462499968}{11}$ \\[8pt]

\rowcolor{rowgray}
$12$ &
$ \dfrac{2662835161020746401}{13721400}$ &
$\dfrac{151266708995597}{495}$ &
$289525458104$ \\[8pt]

\rowcolor{white}
$13$ &
$\dfrac{3030691644857404805897}{1598647050}$ &
$ \dfrac{321470988355756}{105}$ &
$ \dfrac{37323117924480}{13}$ \\[8pt]

\rowcolor{rowgray}
$14$ &
$ \dfrac{29535849328844384553856}{1578151575}$ &
$\dfrac{216476734691524220}{7007}$ &
$\dfrac{201031033602384}{7}$ \\[8pt]

\rowcolor{white}
$15$ &
$\dfrac{1891441425977079362775646}{10145260125}$ &
$ \dfrac{10093926006696084736}{32175}$ &
$ 289441832285280$ \\[8pt]

\rowcolor{rowgray}
$16$ &
$ \dfrac{64817896133977514308038803}{34629154560}$ &
$\dfrac{192402703062289059181}{60060}$ &
$\dfrac{5872668577670337}{2}$ \\[8pt]

\rowcolor{white}
$17$ &
{\small$\dfrac{71854700832696435940625956553}{3797612418600}$} &
$ \dfrac{493620493470363121754}{15015}$ &
$ \dfrac{509335737033576096}{17}$ \\[8pt]

\rowcolor{rowgray}
$18$ &
{\footnotesize$ \dfrac{416621283168424043953343380081}{2165154591600}$} &
{\small$\dfrac{59884860269237019388619}{176715}$} &
$\dfrac{921823061292056348}{3}$ \\[8pt]

\rowcolor{white}
$19$ &
{\scriptsize$\dfrac{5275263677619248981107022533667971}{2681393603738850}$} &
{\small$ \dfrac{59682200835264444248792}{17017}$} &
$ \dfrac{60149764030617173760}{19}$ \\[8pt]

  \bottomrule
\caption{Coefficients for the three-loop all-equal-mass expansion in equation~\eqref{e:IQ3loop}.}
\label{tab:3loop-coeffs}
\end{longtable}
\end{landscape}
\section{The four-loop case}\label{sec:4loop1mass}
The four-loop all-equal-mass integral is 
\begin{multline}\label{eq:4loop-appendix}
\intsunset^{(4)}(p^2, \mathbf{1}) = \persunset^{(4)}(p^2, \mathbf{1})
\left( -5(R^{(4)}(p^2))^4 + \sum_{l=1}^\infty e^{l R^{(4)}(p^2)}
  \sum_{r=0}^{3} d_r^{(4)}(l) (R^{(4)}(p^2))^r \right)\cr
-80\zeta(3) \Frob^{1\,(4)}(p^2)\,.
\end{multline}
The coefficients 
$d_r^{(4)}(l)$, with $0\leq r\leq 3$ are given in
tables~\ref{tab:4loop-c0-c1-coeffs} and~\ref{tab:4loop-c2-c3-coeffs}. These
coefficients are obtained using the code in~\cite{repoAllLoopSunset}.
\begin{landscape}

\begin{center}
{\large\bfseries Coefficients $d_0^{(4)}(\ell)$ and $ d_1^{(4)}(\ell)$ for the $L=4$ Multiloop Sunset}
\end{center}
\vspace{4pt}

\begin{longtable}{C{0.4cm} C{9.0cm} C{9.0cm}}
\toprule
\rowcolor{headerblue}
$\ell$ & $(-1)^\ell d_0^{(4)}(\ell)$ & $(-1)^\ell d_1^{(4)}(\ell)$ \\
\midrule
\endfirsthead
\toprule
\rowcolor{headerblue}
$\ell$ & $(-1)^\ell d_0^{(4)}(\ell)$ & $(-1)^\ell d_1^{(4)}(\ell)$ \\
\midrule
\endhead
\bottomrule
\endfoot

\rowcolor{white}  $1$  & $0$ & $240$ \\[2pt]
\rowcolor{rowgray}$2$  & $630$ & $870$ \\[2pt]
\rowcolor{white}  $3$  & $\F{23620}{3}$ & $\F{53450}{9}$ \\[8pt]
\rowcolor{rowgray}$4$  & $\F{2328505}{24}$ & $\F{187075}{4}$ \\[8pt]
\rowcolor{white}  $5$  & $\F{5166121}{4}$ & $\F{20821971}{50}$ \\[8pt]
\rowcolor{rowgray}$6$  & $\F{16311974269}{900}$ & $\F{694521607}{180}$ \\[8pt]
\rowcolor{white}  $7$  & $\F{111336261833}{420}$ & $\F{23978615637}{686}$ \\[8pt]
\rowcolor{rowgray}$8$  & $\F{13184469178409351}{3292800}$ & $\F{2159501115167}{7840}$ \\[8pt]
\rowcolor{white}  $9$  & $\Fs{4138135621730876579}{66679200}$ & $\F{532452195983189}{476280}$ \\[8pt]
\rowcolor{rowgray}$10$ & $\Fs{218340339433172908237}{222264000}$ & $ \F{4143926311480331}{176400}$ \\[8pt]
\rowcolor{white}  $11$ & $\Fs{1289503700192366915033}{81496800}$ & $ \Fs{7332249548528853169}{7826280}$ \\[8pt]
\rowcolor{rowgray}$12$ & $\Fs{9180926381857951580193311}{35500006080}$ & $ \Fs{118424516949509090311}{5122656}$ \\[8pt]
\rowcolor{white}  $13$ & $\Ff{51438346783674293529672868}{12018231225}$ & $ \Fs{289829501049924675872932}{586170585}$ \\[8pt]
\rowcolor{rowgray}$14$ & $\Ff{9306807772205606379343991787551}{129989188929600}$ & $ \Fs{10008094049987378558822845}{1010017008}$ \\[8pt]
\rowcolor{white}  $15$ & $\Fsc{32738328734448034104605689186103}{27081081027000}$ & $ \Fsc{2881744581783241678853344}{15030015}$ \\[8pt]
\rowcolor{rowgray}$16$ & $\Fsc{9511608121403402401605752312380891}{462183782860800}$ & $ \Ff{932186741645270934232073869}{256512256}$ \\[8pt]
\rowcolor{white}  $17$ & $\Fsc{207946720728469354554430199795312981}{589284323147520}$ & $ \Fsc{32137264368294370054091612587807}{472591767648}$ \\[8pt]
\rowcolor{rowgray}$18$ & $\Fsc{25456747345278724838046021646302868916011}{4180168703200089600}$ & $ \Fsc{2583563035969220472370362290172271}{2047055250240}$ \\[8pt]
\rowcolor{white}  $19$ & $\Fsc{1026102856392132733834308599203930637688581}{9707280655209096960}$ & $ \Fsc{22213605712201435099889872621957081277}{953384392440480}$ \\[8pt]

  \bottomrule
   \caption{Coefficients $ d_0^{(4)}(\ell)$ and $d_1^{(4)}(\ell)$ for the
  four-loop all-equal-mass expansion in equation~\eqref{eq:4loop-appendix}.}
\label{tab:4loop-c0-c1-coeffs}
\end{longtable}

\newpage

\begin{center}
{\large\bfseries Coefficients $ d_2^{(4)}(\ell)$ and $ d_3^{(4)}(\ell)$ for the $L=4$ Multiloop Sunset}
\end{center}
\vspace{4pt}

\begin{longtable}{C{0.4cm} C{9.0cm} C{9.0cm}}
\toprule
\rowcolor{headerblue}
$\ell$ & $(-1)^{\ell+1}d_2^{(4)}(\ell)$ & $(-1)^{\ell+1}d_3^{(4)}(\ell)$ \\
\midrule
\endfirsthead
\toprule
\rowcolor{headerblue}
$\ell$ & $(-1)^{\ell+1} d_2^{(4)}(\ell)$ & $(-1)^{\ell+1} d_3^{(4)}(\ell)$ \\
\midrule
\endhead
\bottomrule
\endfoot

\rowcolor{white}  $1$  & $ 60$ & $ 60$ \\[2pt]
\rowcolor{rowgray}$2$  & $ 705$ & $ 450$ \\[2pt]
\rowcolor{white}  $3$  & $ \F{26660}{3}$ & $ 4220$ \\[8pt]
\rowcolor{rowgray}$4$  & $ \F{455745}{4}$ & $ 50145$ \\[8pt]
\rowcolor{white}  $5$  & $ \F{7846812}{5}$ & $ 656112$ \\[8pt]
\rowcolor{rowgray}$6$  & $ \F{68065687}{3}$ & $ 9175170$ \\[8pt]
\rowcolor{white}  $7$  & $ \F{16673722020}{49}$ & $ \F{940420800}{7}$ \\[8pt]
\rowcolor{rowgray}$8$  & $ \F{588043043271}{112}$ & $ \F{4072262625}{2}$ \\[8pt]
\rowcolor{white}  $9$  & $ \F{15661912577882}{189}$ & $ \F{95081915240}{3}$ \\[8pt]
\rowcolor{rowgray}$10$ & $ \F{46629859339087}{35}$ & $ 503956078440$ \\[8pt]
\rowcolor{white}  $11$ & $ \Fs{18422720261265234}{847}$ & $ \F{89690700456780}{11}$ \\[8pt]
\rowcolor{rowgray}$12$ & $ \Fs{332379263011895915}{924}$ & $ 133842621289025$ \\[8pt]
\rowcolor{white}  $13$ & $ \Fs{78276381566852736844}{13013}$ & $ \Fs{28912327783023420}{13}$ \\[8pt]
\rowcolor{rowgray}$14$ & $ \Fs{711581788779235457945}{7007}$ & $ \Fs{261412950792522720}{7}$ \\[8pt]
\rowcolor{white}  $15$ & $ \Fs{8652798193114594186344}{5005}$ & $ 632782893995694844$ \\[8pt]
\rowcolor{rowgray}$16$ & $ \Fs{1899387916957131801992745}{64064}$ & $ \Fs{43229858131865045985}{4}$ \\[8pt]
\rowcolor{white}  $17$ & $ \Ff{148049273275274835073944645}{289289}$ & $ \Fs{3159947780099952023400}{17}$ \\[8pt]
\rowcolor{rowgray}$18$ & $ \Ff{742264599209072634928184983}{83538}$ & $ 3216909271990331115460$ \\[8pt]
\rowcolor{white}  $19$ & $ \Ff{952652961057018330302763844447}{6143137}$ & $ \Fs{1063694718739578260279520}{19}$ \\[8pt]

  \bottomrule
  \caption{Coefficients $d_2^{(4)}(\ell)$ and $d_3^{(4)}(\ell)$ for the
  four-loop all-equal-mass expansion in equation~\eqref{eq:4loop-appendix}.}
\label{tab:4loop-c2-c3-coeffs}
\end{longtable}

\end{landscape}


\end{document}